\newcommand{\CC}{\mathbb{C}}
\newcommand{\EE}{\mathbb{E}}
\newcommand{\PP}{\mathbb{P}}
\newcommand{\RR}{\mathbb{R}}
\newcommand{\cA}{\mathcal{A}}
\newcommand{\cB}{\mathcal{B}}
\newcommand{\cE}{\mathcal{E}}
\newcommand{\cF}{\mathcal{F}}
\newcommand{\cI}{\mathcal{I}}
\newcommand{\cM}{\mathcal{M}}
\newcommand{\cN}{\mathcal{N}}
\newcommand{\cP}{\mathcal{P}}
\newcommand{\cS}{\mathcal{S}}
\newcommand{\cV}{\mathcal{V}}
\newcommand{\cW}{\mathcal{W}}
\newcommand{\sI}{\mathscr{I}}
\DeclareMathOperator*{\argmin}{arg\,min}
\DeclareMathOperator*{\argmax}{arg\,max}
\let\eps\varepsilon
\let\to\longrightarrow
\newcommand{\conv}[2][\ ]{\overset{#1}{\underset{#2}{\to}}}
\newcommand{\aseq}[1]{\underset{#1}{=}}
\newcommand{\equi}[1]{\underset{#1}{\sim}}
\newcommand{\mbf}[1]{\mathbf{#1}}
\theoremstyle{definition}
\newtheorem{defi}{Definition}
\theoremstyle{plain}
\newtheorem{lem}{Lemma}
\journal{Elsevier}
\begin{document}
\begin{frontmatter}

\title{
Robust a posteriori estimation of 
probit-lognormal seismic fragility curves via 
sequential
design of experiments and constrained reference prior
}

\author[label1,label2]{Antoine Van Biesbroeck\corref{cor1}}
\ead{antoine.van_biesbroeck@ens-paris-saclay.fr}\ead[url]{https://vbkantoine.github.io}

\author[label3]{Clément Gauchy}
\author[label2]{Cyril Feau}
\author[label1]{Josselin Garnier}

\affiliation[label1]{organization={CMAP, CNRS, École polytechnique, Institut Polytechnique de Paris},%
            city={91120 Palaiseau},
            country={France}}

\affiliation[label2]{organization={Université Paris-Saclay, CEA, Service d'Études M\'ecaniques et Thermiques},%
            city={91191 Gif-sur-Yvette},
            country={France}}

\affiliation[label3]{organization={Université Paris-Saclay, CEA, Service de Génie Logiciel pour la Simulation},%
            city={91191 Gif-sur-Yvette},
            country={France}}

\cortext[cor1]{Corresponding author}

\begin{abstract}
{
A seismic fragility curve expresses the probability of failure of a structure conditional to an intensity measure (IM) derived from seismic signals. When only limited data is available, the practitioner often refers to the probit-lognormal model coupled with maximum likelihood estimation (MLE) to obtain estimates of these curves. This means that only a binary indicator of the state (BIS) of the structure is known, namely a failure or non-failure state indicator, when it is subjected to a seismic signal with an intensity measure IM. 
In this context, the objective of this work is to propose a method for optimally estimating such curves by obtaining the most precise estimate possible with the minimum of data. The novelty of our work is twofold. First, we present and show how to mitigate the likelihood degeneracy problem which is ubiquitous with small data sets and hampers frequentist approaches such as MLE. Second, we propose a novel strategy for sequential design of experiments (DoE) that selects seismic signals from a large database of synthetic or real signals via their IM values, to be applied to structures to evaluate the corresponding BISs. This strategy relies on a criterion based on information theory in a Bayesian framework. It therefore aims to sequentially designate the IM value such that the pair (IM, BIS) has on average, with respect to the BIS of the structure, the greatest impact on the posterior distribution of the fragility curve. The methodology is applied to a case study from the nuclear industry. The results demonstrate its ability to efficiently and robustly estimate the fragility curve, and to avoid degeneracy even with a limited amount of data, {i.e., less than 100}. Furthermore, we demonstrate that the estimates quickly reach the model bias induced by the probit-lognormal modeling. Eventually, two criteria are suggested to help the user stop the DoE algorithm.}
\end{abstract}

\begin{keyword}
Design of experiments \sep Bayesian inference \sep Fragility curves \sep Reference prior\sep Seismic Probabilistic Risk Assessment

\end{keyword}

\end{frontmatter}


{
\section{Introduction}

Fragility curves are key ingredients of the seismic probabilistic risk assessment (SPRA) framework, introduced in the 1980s for seismic risk assessment studies carried out on nuclear facilities \cite{Kennedy1980,KENNEDY198447, ELLINGWOOD1988, PARK1998, KENNEDY1999, Cornell2004}. They are also part of the performance-based earthquake engineering framework (PBEE) \citep{GHOBARAH2001878, Cornell2004, Luco2007, Noh2014} since they represent a convenient means to measure the seismic performance of structures under seismic excitation, in probabilistic terms. In the two cases, the seismic hazard ---driven by the magnitude (M), the source-site distance (R), and other earthquake parameters--- is reduced to a scalar value derived from the seismic ground motion ---the intensity measure (IM)--- under the so-called ``sufficiency assumption'' \cite{Cornell2004,Luco2007}. In practice, fragility curves therefore express the probability of failure of a mechanical structure as a function of an IM value of interest such as peak ground acceleration (PGA) or pseudo-spectral acceleration (PSA). It should be noted that the sufficiency assumption was introduced to reduce estimation costs since it assumes that the fragility curve of a given structure is identical regardless of the seismic scenario. As shown in \cite{Radu2018,Grigoriu2021}, different seismic scenarios can however lead to identical distributions of some IMs, although the underlying seismic signals have significantly different frequency contents. As a result, the sufficiency assumption is not necessarily met, especially for non-linear, multimodal structures, etc. It is however possible to focus on the definition of the resulting fragility curve, without taking the assumption for granted, which is the case in this work.

If the available data set contains more information than just a binary indicator of the state (BIS) of the structure ---i.e., a failure or non-failure state indicator--- when subjected to seismic loading, machine learning-based techniques can be leveraged to estimate fragility curves. For example, let us quote: linear regression or generalized linear regression \cite{Lallemant2015, Zentner2017, Mai2017, VanBiesbroeckESAIMProcS}, classification-based techniques \cite{BERNIER2019, KIANI2019, Sainct2020}, kriging \cite{Gidaris15, GENTILE2020101980,Kyprioti2021, Gauchy2022IJUQ,Yi2024}, polynomial chaos expansion \cite{Mai16}, stochastic polynomial chaos expansions \cite{ZHU2023}, and artificial neural networks \cite{WANG2018232, MITROPOULOU2011, WANGZ2018}. Whenever data is obtained through numerical simulations, some of these methods can be coupled with adaptive techniques to reduce the number of calculations required \cite{WANG2018232, Sainct2020, Gidaris15}.

If the available information is limited to the BIS of the structure under seismic excitation, the use of a parametric model of the fragility curve is mandatory. Historically, in the framework of SPRA and PBEE, the probit-lognormal model was chosen and it remains prevalent to this day due to its proven capability to handle limited data from various sources: (i) expert assessments supported by test data \cite{Kennedy1980, KENNEDY198447, PARK1998, Zentner2017}, (ii) experimental data \cite{PARK1998}, (iii) empirical data from past earthquakes \cite{Shinozuka2000, Straub2008, Lallemant2015, Buratti2017, Laguerre2024}, and (iv) analytical results obtained from various numerical models using artificial or natural seismic excitations \cite{ELLINGWOOD2001251, KIM2004, ZENTNER20101614, KOUTSOURELAKIS2010, Zentner2017, Mai2017, TREVLOPOULOS2019, WangF2020,MANDAL201611, WANGZ2018, WANG2018232, ZHAO2020103,
Katayama2021,Gauchy2021,KHANSEFID2023,LEE2023,VanBiesbroeck2023, VanBiesbroeckUNCECOMP2023, VanBiesbroeckESAIMProcS, MAHANTA2025}.

The present work is part of the context just mentioned in which only partial knowledge of the structural response under seismic excitation is available. Several strategies can then be implemented to estimate the two parameters of the probit-lognormal model, namely the median $\alpha$ and the log of the standard deviation $\beta$. Some of them are compared and their strengths and weaknesses are highlighted in \cite{Lallemant2015}. Thus, when the data is binary, \citeauthor{Lallemant2015} \cite{Lallemant2015} and \citeauthor{Mai2017} \cite{Mai2017} both recommend using maximum likelihood estimation (MLE). Additionally, when the data samples are independent of each other, the bootstrap technique can be used to obtain confidence intervals related to the size of the samples considered \cite{Shinozuka2000, ZENTNER20101614, WangF2020}. Recently, \citeauthor{VanBiesbroeck2023} \cite{VanBiesbroeck2023} demonstrated however that MLE can lead to unrealistic or degenerate fragility curves, such as unit-step functions, when data is sparse, because the likelihood itself can be degenerate. This results from the composition of the data set. For example, this occurs when no failures are observed, etc. Bayesian approaches can help to overcome these likelihood degeneracy issues. 
They are becoming increasingly popular in seismic fragility analysis since they are seen as a way to integrate prior knowledge (expert judgment, etc.) with newly acquired data \cite{Gardoni2002, Straub2008, WANG2018232, Katayama2021, LEE2023, KOUTSOURELAKIS2010, damblin2014, TADINADA201749, KWAG20181, Jeon2019, TABANDEH2020, VanBiesbroeck2023, VanBiesbroeckESAIMProcS, MAHANTA2025}. However, for posterior estimates to be meaningful (e.g., credibility intervals), the prior must be carefully defined or chosen. In the SPRA framework, Bayesian inference is often used to update existing probit-lognormal fragility curves previously obtained through various approaches assuming, for example, lognormal distribution for $\alpha$ \cite{WANG2018232}, 
independent distributions for the prior values of $\alpha$ and $\beta$ such as uniform distributions \cite{KOUTSOURELAKIS2010}, a normal distribution for $\ln(\alpha)$ and the improper distribution $1/\beta$ for $\beta$ \cite{Straub2008}, a normal distribution for $\ln(\alpha)$ and a uniform distribution for $\beta$  \cite{MAHANTA2025}, etc. In \cite{VanBiesbroeck2023, VanBiesbroeckUNCECOMP2023, VanBiesbroeckESAIMProcS}, the authors favor the framework of the theory of reference priors to define the prior on $\alpha$ and $\beta$. According to the metrics introduced in \cite{Kass1996}, this theory allows to determine a prior which is qualified as ``objective''. Specifically, the criterion introduced by \citeauthor{Bernardo1979} \cite{Bernardo1979} aims at identifying the prior that maximizes the capacity to ``learn'' from observations. The prior is therefore selected to maximize a mutual information indicator, which expresses the information provided by the data to the posterior, relatively to the prior. In \cite{VanBiesbroeck2023, VanBiesbroeckUNCECOMP2023} authors showed that the resulting prior, namely the Jeffreys prior, depends only on the distribution of the IM of interest. This prior is therefore suitable for all equipments in an industrial installation because they are subject to the same seismic hazard, i.e., to the same seismic signals corresponding to the seismic scenario of interest and which must be used for the calculations or tests of said installation. With limited data sets, authors showed that this approach outperforms the classical approaches of the literature both in terms of regularization (fewer degenerate estimations) and stability (absence of outliers when sampling the posterior distribution of the parameters). Finally, let us note that the Bayesian framework is also relevant for fitting numerical models (e.g., mathematical expressions based on engineering assessments or physics-based models) to experimental data in order to estimate fragility curves \cite{Gardoni2002, TABANDEH2020} or meta-models such as logistic regressions \cite{KOUTSOURELAKIS2010, Jeon2019}.

{This work is an extension of the work presented by the authors in the references \cite{VanBiesbroeck2023, VanBiesbroeckUNCECOMP2023}. To go further, we propose a design of experiments (DoE) strategy to make the most of the probit-lognormal model by obtaining the most accurate estimate possible with the minimum amount of data. We therefore propose a criterion inherited from the reference prior theory, which aims to sequentially designate the value of the IM such that the pair (IM, BIS) has on average, with respect to the BIS of the structure, the greatest impact on the posterior distribution of the fragility curve. In practice, this allows (i) the sequential selection of seismic signals via their IM values, from a large database of synthetic signals generated to match seismic scenario of interest, in order to (ii) evaluate the associated structural responses to derive the BISs. For brevity, we will refer to this process as “data selection” in the following.}

{In Section~\ref{sec:contrib}, the positioning, contribution and methodology of this paper are first presented.} The Bayesian framework is the subject of Section~\ref{sec:model}. In particular, the definitions of ``likelihood degeneracy'' and ``robust a posteriori estimation'' are given there. Section~\ref{sec:PEmethod} is devoted to the presentation of our DoE strategy. Then, in Section~\ref{sec:application} a comprehensive study of the performance of this strategy is presented considering a case study from the nuclear industry. The conclusion is finally proposed in Section~\ref{sec:conclusion}. We specify that the results presented in the paper are also supported by the study of a ``toy case'' which is presented in \ref{app:toycases}.

{
\section{Positioning, contribution and methodology \label{sec:contrib}}

\subsection{Positioning}

This paper focuses on the estimation of seismic fragility curves using the lognormal model in the case of limited (typically less than 100) and binary data, i.e., they only reflect the state of the structure (failure or non-failure) when subjected to seismic excitation. Therefore, it mainly addresses the issues of equipment for which only binary seismic test results are available (for example, electrical relay qualification tests, etc.). Nevertheless, this also applies to simulation-based approaches whose results are reduced, by simple post-processing, to binary data. Numerical simulations provide much richer information than the simple binary state of a structure. Thus, they allow the use of more or less sophisticated statistical models, such as those mentioned in the introduction, to capture the relationship between the engineering demand parameter (EDP) and the IM, which can be complex, particularly when the behavior of the structure becomes non linear under the effect of earthquakes (which is the case of the industrial structure considered in this work). In such situations, when available data is limited, choosing the most appropriate statistical model is not a trivial task. In \cite{Kyprioti2021,Gauchy2022IJUQ,Yi2024}, the authors highlight that a major challenge for the application of surrogate models as part of earthquake engineering, is related to the random variability associated with seismic hazard which exhibits heteroscedastic behavior. This may result in using a large amount of data to obtain reliable estimates. In \cite{Mai2017} and \cite{VanBiesbroeckESAIMProcS}, the authors showed that the simplest model ---involving less data to adjust its parameters--- assuming a linear correlation between the logarithm of the EDP and that of the IM should be avoided due to the irreducible model bias that it can introduce and which can greatly affect the estimation of fragility curves. In their setting, \cite{Mai2017} showed that the probit-lognormal model gives better estimates than those based on linear regression.  Although the use of a parametric model for the estimation of seismic fragility curves is subject to debate, its abundant use in the literature shows that the lognormal model appears to the practitioner as a model that is both pragmatic and relevant with a reduced number of data. Let us note that recent studies on the impact of IMs on fragility curves \cite{Sainct2020, CIANO2020, CIANO2022} suggest that the choice of an appropriate IM makes it possible to reduce the potential biases between reference fragility curves and their probit-lognormal estimations. Finally, let us add that regardless of the statistical model or strategy chosen to estimate seismic fragility curves, one way to minimize the amount of data to be used is to implement adaptive strategies. This is particularly true for structures with a high failure threshold relative to the seismic scenario of interest, even with the use of more sophisticated models than the probit-lognormal model \cite{ZHU2023}. 

\subsection{Contribution}

For all the reasons just mentioned, we seek in this paper to take full advantage of the probit-lognormal model by proposing a DoE strategy within a Bayesian framework. We take as support the reference prior theory in order to define an objective prior adapted to the probit-lognormal model. For that, we relied on the work  \cite{VanBiesbroeck2023} in which the authors derived the mathematical expression of the Jeffreys prior of the probit-lognormal model. Then, in order to facilitate the implementation and improve the numerical efficiency of the DoE algorithm, we propose a simplified analytical expression of the Jeffreys prior which takes up its main features. This expression also builds on recent developments presented in \cite{VanBiesbroeck2024constraints} that aim to slightly constrain the prior to ensure that it is proper. In our framework, this addresses the problems related to likelihood degeneracy, which are ubiquitous in small data sets and involve an improper posterior \cite{VanBiesbroeck2023}. Since such posterior cannot be normalized as a probability, it prevents the use of MCMC algorithms. This is obviously a crucial point for any posterior estimation. These developments are presented in detail in this article and constitute, besides the DoE strategy itself, another major contribution of this work. Eventually, two criteria are suggested to help the user stop the DoE algorithm.

Note that a new concept of robust fragility has emerged in the literature over the last two decades. It derives from the concept of robust reliability defined in \cite{PAPADIMITRIOU2001} for the assessment of the reliability of structures. In this framework, ``robustness'' means that modeling uncertainties are explicitly taken into account, so that the calculated reliability is not sensitive to them. In the context of this paper, the robust fragility curve refers to the average fragility curve obtained by integration over the possible values of the parameters of the lognormal model, whose density is defined by the posterior distribution, and the associated credibility interval is quantified by its variance \cite{Jalayer2015}. Knowing that with a limited amount of data, the probability of having a degenerate likelihood can be significant and that this can compromise any a posteriori estimation (the posterior can be improper, it depends on the prior) we propose to define a robust posterior estimation as an estimation based on a proper posterior. The definition that we propose presents the conditions that must be satisfied for the posterior to be proper. It therefore defines the conditions necessary for the estimation of a robust fragility curve.

\subsection{Methodology : principle and validation}

Our methodology aims to optimally select seismic signals from a wide range of signals corresponding to a seismic scenario of interest. Since real signal databases are generally not rich enough to allow such a selection, we use in the first place a seismic signal generator (SSG) to enrich them. Among the set of SSGs available in the literature, we used the parameterized stochastic model of modulated and filtered white-noise process defined in \cite{Rezaeian2010}. This generator has been used several times since 2010 \cite{Mai16,Mai2017,Vassiliou2017,Sainct2020,Gauchy2022IJUQ,
VanBiesbroeck2023, VanBiesbroeckESAIMProcS,ZHU2023}. In our experience, it efficiently addresses both temporal and spectral nonstationarities of seismic signals. From an engineering point of view, the current IMs of the generated artificial signals have statistical characteristics close to those of the real signals used for enrichment. For the present work, we rely on the $10^5$ synthetic signals generated by \citeauthor{Sainct2020} \cite{Sainct2020}. They were obtained by calibrating the SSG from 97 real accelerograms selected in the European Strong Motion Database for a magnitude $M$ such that $5.5 \leq M \leq 6.5$, and a source-to-site distance $R < 20$~km~\cite{ESMD}.

It should be note that the DoE algorithm proposed in this work is independent of the generator and the seismic scenario. The study of the influence of the SSG is out of the scope of this work, which aims to propose a general methodological framework for the selection of seismic signals, for an optimal estimation of fragility curves.

For illustrative and performance evaluation purposes, the proposed methodology is applied to a case study taken from the nuclear industry: a piping system. We use here a numerical model based on beam elements, validated by comparison with tests performed on the Azalee shaking table at CEA/Saclay \cite{TOUBOUL1999}. This model that could today be described as a "low-fidelity model", offers a good trade-off between accuracy and computation time. It allowed us to perform $8\cdot 10^4$ calculations, by randomly selecting $8\cdot 10^4$ excitation signals from among the $10^5$ artificial signals. Such a quantity of calculations would not have been reasonably achievable with a more sophisticated numerical model. In our setting, approximately 14\% of these calculations involve nonlinear behavior (material nonlinearity) of the piping system. This represents approximately 90 days of sequential calculations on a simple laptop. This numerical effort was carried out for comparative analysis purposes, but cannot be implemented in practice on the scale of an industrial installation. That is, in fact, the objective of the proposed methodology, namely to enable a robust estimation of fragility curves involving a minimum of data (i.e., less than 100).

Thus, the $8\cdot 10^4$ calculations allow estimation of a reference fragility curve, against which the estimates of our method can be compared. This large number of results also allows the data selection methodology to be replicated many times, considering subsets of smaller size than the total validation sample. Each replication consists of randomly choosing two initial pairs (IM, BIS) and adding new pairs by sequentially applying the DoE algorithm up to 250 selections. Compared to a standard sampling method, these replications allow for statistical validation of the DoE method, based on performance metrics, such as the mean squared error, etc.}
}

\section{Bayesian estimation of seismic fragility curves}\label{sec:model}

    \subsection{{Statistical} Model and likelihood}

In this work, we address case studies within which the %
observation of the equipment's response to the ground-motion %
is limited to a binary outcome: failure {or non-failure.}
We denote by $Z$ the random variable that equals $1$ if the equipment fails, and $0$ otherwise. The value of $Z$ is determined by uncertainties that comes from the mechanical model and from the seismic scenario. The latter is summarized, {in a non-exhaustive manner}, into a chosen IM that is stochastic. The associated random variable is denoted by $A$, it takes its values in a set $\cA\subset(0,\infty)$.

In those settings, it is common to consider a probit-lognormal statistical modeling of the fragility curve:
    \begin{equation}\label{eq:probit-model}
        P_f(a) = \PP(Z=1|A=a,\,\theta) = \Phi\left(\frac{\log a-\log\alpha}{\beta}\right),
    \end{equation}
    
{
where $\Phi$ is the cumulative distribution function of a standard normal distribution and $\theta:=(\alpha,\beta)$ designates parameters that need to be estimated. $\alpha\in(0,\infty)$ is the median of the fragility curve and $\beta\in(0,\infty)$ is the log standard deviation, so that $\theta\in\Theta=(0,\infty)^2$.
}

Given $k\geq1$ and an observed sample of independent and identically distributed realizations of the random couple $(Z,A)$: $(\mbf z^k, \mbf a^k)$, $\mbf z^k=(z_i)_{i=1}^k$, $\mbf a^k=(a_i)_{i=1}^k$, the statistical model described above gives the following likelihood, denoted by $\ell_k(\mbf z^k,\mbf a^k|\theta)$:
    \begin{align}\label{eq:likelihood}
        &\ell_k(\mbf z^k,\mbf a^k|\theta)\\ &= \prod_{i=1}^k\Phi\left(\frac{\log a_i-\log\alpha}{\beta}\right)^{z_i}\left(1-\Phi\left(\frac{\log a_i-\log\alpha}{\beta}\right)\right)^{1-z_i}p_A(a_i),\nonumber
    \end{align}
with $p_A$ denoting the probability density function of the distribution of the IM.
We let the reader notice that the distribution $p_A$ may influence the likelihood. 
This point is further discussed in Section~\ref{sec:prior}, it emphasizes the importance to have access to a realistic distribution of the IM. 
The practical distribution considered for this work is detailed in Section~\ref{sec:application}.
{It is common in the literature (e.g. \cite{VanBiesbroeck2023}) to define the likelihood as the probability density function of $\mbf z^k$ conditionally to $(\mbf a^k,\theta)$. That density, denoted by $\ell_k(\mbf z^k|\mbf a^k,\theta)$, equals the product in Eq.~(\ref{eq:likelihood}) without the the terms $(p_A(a_i))_{i=1}^k$. However, even with this convention, $p_A$ is eventually involved in the same way.} 

\subsection{Degeneracy of the likelihood}\label{sec:likelihooddegen}

{In \cite{VanBiesbroeck2023}, the authors studied the asymptotic behavior of $\ell_k(\mbf z^k|\mbf a^k,\theta)$ w.r.t. $\theta$.} Their result lead to the understanding of a critical phenomenon which we call the ``degeneracy'' of the likelihood.

\begin{defi}[Likelihood degeneracy]\label{def:degeneracy}
    {If the observed samples $(\mbf z^k,\mbf a^k)$ belong to one of the following three types:}
    \begin{itemize}
        \item {type 1 : no failure is observed: $z_i=0$ for any $i$;}
        \item {type 2 : only failures are observed: $z_i=1$ for any $i$;}
        \item {type 3 : the failures and non-failures are partitioned into two disjoint subsets when classified according to their IM values:}
        there exists $a\in\cA$ such that for any $i,j$, $a_i<a<a_j\Longleftrightarrow z_i\ne z_j$ (see the illustration in Fig.~\ref{fig:degenerate-frag});
    \end{itemize}
    then the likelihood is called \emph{degenerate}.
\end{defi}

An example of a degenerate likelihood is presented in Fig.~\ref{fig:degenerate-frag}.
{In practice, this degeneracy constitutes a major obstacle to the use of a method such as MLE coupled with the bootstrap technique. Within the Bayesian framework, it can compromise posterior sampling by providing improper posteriors. Whether with frequentist or Bayesian methods, this degeneracy can be at the origin of critical estimates of fragility curves such as unit step functions. This phenomenon is quite frequent when a limited number of data samples are observed. {In our context, where we consider a large set of seismic signals generated beforehand to match a seismic scenario of interest,} it is more pronounced when (i) failures rarely occur with respect to the distribution of seismic signals, and (ii) the response of the structure is more correlated to the considered IM because, in this case, the value of $\beta$ of the associated fragility curve tends towards 0 \cite{VanBiesbroeckUNCECOMP2023}. In Section~\ref{sec:application}, we discuss the occurrence of degeneracy in the illustrative case study.}

\begin{figure*}
    \centering
    \includegraphics[width=4.44cm]{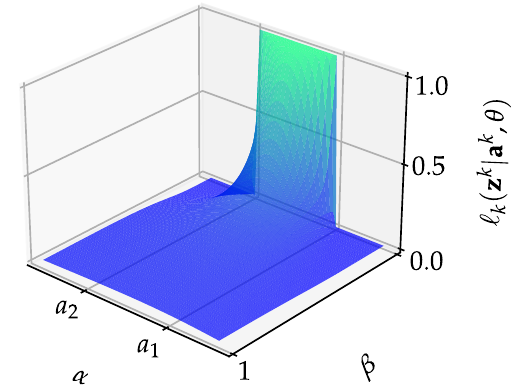}\qquad
    \includegraphics[width=4.44cm]{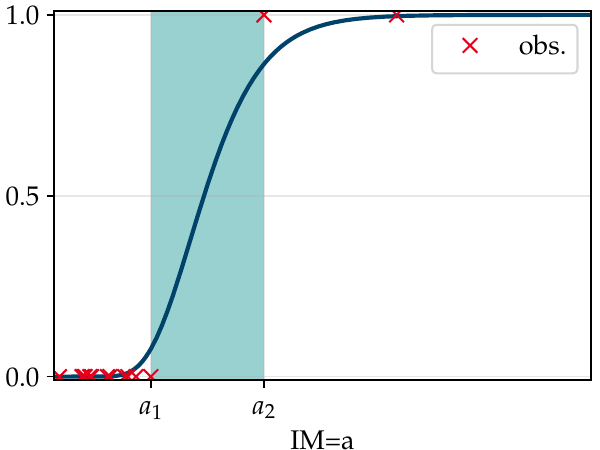}\\[0.7em]
    \begin{minipage}{0.94\textwidth}\footnotesize \itshape Graphs of (left) the likelihood given the degenerate data sample as a function of the tuple $(\alpha,\beta)$ and (right) the fragility curve (blue curve) according to which the points (red crosses) are sampled : each red cross is a tuple $(z,a)$, where its X-axis equals $a$ and its Y-axis equals $z$.
    On both figures, $a_1$ is the maximal observed IM value among ``non-failures'', and $a_2$ is the minimal one among ``failures''.
    \end{minipage}
    \caption{{Example of a type 3 data sample $(\mbf z^k,\mbf a^k)$ which gives a degenerate likelihood.}} %
    \label{fig:degenerate-frag}
\end{figure*}

\subsection{Constrained reference prior and posterior}\label{sec:prior}

The Bayesian framework considers $\theta$ to be stochastic. It permits the computation of the posterior distribution, i.e. the distribution of $\theta$ given the knowledge of the observations, providing \emph{a posteriori} estimates of the fragility curve and a quantification of estimation uncertainty.
However, the definition of the posterior necessitates to include a prior distribution on the parameter $\theta$. Indeed, denoting the posterior by $p(\theta|\mbf z^k,\mbf a^k)$, the Bayes theorem states 
    \begin{equation*}
        p(\theta|\mbf z^k,\mbf a^k) = \frac{\ell_k(\mbf z^k,\mbf a^k|\theta)\pi(\theta)}{\int_\Theta \ell_k(\mbf z^k,\mbf a^k|\tilde\theta)\pi(\tilde\theta)d\tilde\theta},
    \end{equation*}
with $\pi$ denoting the prior.

As mentioned in the introduction, different approaches exist in the literature in order to choose the prior. In this work, we take as a support the reference prior theory \cite{Bernardo1979a, VanBiesbroeckBA2023}, which provides tools to define a prior whose choice can be qualified as objective. 
{In this way, the methodology presented in this article has an “auditable” character, which can be essential for sensitive industries such as the nuclear industry.}
The reference prior $\pi^\ast$ is typically defined as a solution of the {optimization} problem $\pi^\ast\in\argmax_\pi\sI^k(\pi) $ when $k\to\infty$, where $\sI^k$ denotes the mutual information:
    \begin{equation*}
        \sI^k(\pi) = \EE_{\mbf z^{k},\mbf a^k}\big[D_\delta \big(\pi(\theta)||p(\theta|\mbf z^k,\mbf a^k)\big)\big],
    \end{equation*}
with $D_\delta$ being a $\delta$-divergence\footnote{Often, $\delta$-divergences are called $\alpha$-divergences. We keep the notation $\delta$ in this work.}: 
$D_\delta(p||q)=\int f_\delta\left(\frac{p(x)}{q(x)}\right)q(x)dx$ where $f_\delta(x)=\frac{x^\delta}{\delta(\delta-1)}$, $\delta\in(0,1)$\footnote{The usual definition of $f_\delta$ for $\delta$-divergences is $f_\delta(x)=\frac{x^\delta-\delta x-(1-\delta)}{\delta(\delta-1)}$. In our definition the other terms vanish.}. 
Maximizing $\sI^k$ amounts to maximizing the information brought by the observations to the posterior. 

The reference prior is in most {cases} the Jeffreys prior \cite{Clarke1994, VanBiesbroeckBA2023}, largely adopted by Bayesian statisticians for its property to be invariant by reparametrization of the {statistical model}.
{It has been derived and applied in the context of seismic fragility assessment of structures and components in previous works \cite{VanBiesbroeck2023, VanBiesbroeckUNCECOMP2023}. The authors showed that its use for the estimation of fragility curves outperforms current approaches found in the literature but, for practitioners, its implementation may suffer from some limitations because:}
\begin{itemize}
    \item[(i)] its theoretical expression does not have any known analytical form, and can be expensive to evaluate. We remind that the Jeffreys prior $J$ is defined by
        \begin{align*}
            J(\theta) &=\sqrt{|\det\cF(\theta)|},\\ \text{with}\quad \cF(\theta)&=-\sum_{z\in\{0,1\}}\int_{a\in\cA}\ell_1(z, a|\theta)\nabla^2_\theta\log\ell_1(z, a|\theta) da. 
        \end{align*}
    \item[(ii)] when the likelihood is degenerate (Definition \ref{def:degeneracy}), this prior yields an improper posterior which prevents its use in those cases.
\end{itemize}

The problem (ii) is of primary interest in our study.
Given the decay rates of the likelihood, it occurs with a wide range of non-informative priors. It is thus important to define the adequate requirement for a mathematically well-established \emph{a posteriori} estimation, namely a robust estimation of the fragility curve.

\begin{defi}[Robust \emph{a posteriori} estimation]\label{def:robust-estimation}
    The \emph{a posteriori} estimation is called \emph{robust} if the posterior is proper. {According to the results of the study of the asymptotic behavior of the posterior carried out by the authors in \cite{VanBiesbroeck2023}, for the probit-lognormal model, the posterior is proper if:}
    \begin{itemize}
        \item the likelihood is not degenerate and the prior verifies
        \begin{itemize}
        	\item[(i)] $\forall\beta>0,\, \pi(\theta)\aseq{\log\alpha\rightarrow\pm\infty}O(1)$,
        	\item[(ii)] $\forall\alpha>0,\,\pi(\theta)\aseq{\beta\rightarrow0}0(1) $, and $\pi(\theta)$ is intgrable in the neighborhood of $\beta\rightarrow\infty$.
        \end{itemize}
        \item the likelihood is degenerate, the prior is proper w.r.t $\beta$ and verifies $\forall\beta>0,\, \pi(\theta)\aseq{\log\alpha\rightarrow\pm\infty}O(1)$.
    \end{itemize}
\end{defi}

{Recall that the Jeffreys prior distribution leads to a proper posterior distribution when the likelihood is non-degenerate \cite{VanBiesbroeck2023}. This means that it satisfies the conditions just stated in the Definition \ref{def:robust-estimation}.} It is clear that a prior satisfying the second criterion of the definition  satisfies the first one as well. It is worth mentioning that Definition~\ref{def:robust-estimation} does not guarantee the absence of estimates that tend to unrealistic fragility curves
---such as unit step functions--- when the likelihood is degenerate.
{Indeed, since the likelihood is equal to $1$ when $\beta = 0$ in this case (see \cite{VanBiesbroeck2023} and Fig.~\ref{fig:degenerate-frag}),} only a ``very well-informed'' prior in the neighborhood of $\beta\to 0$ can avoid this phenomenon with certainty. However, as for IMs whose correlation with the structural response of interest tends towards 1, the value of $\beta$ of the associated fragility curve tends towards 0, the use of such an informed prior can, in return, penalize learning.

To avoid such a situation and taking as support the reference prior theory, the author defines  in \cite{VanBiesbroeck2024constraints} the minimal constraints which can be added to the optimization problem that defines the reference priors to ensure the solution is proper:
    \begin{equation}\label{eq:constrained-optim-problem}
        \pi^\ast\in\argmax_{\pi\ \ k\rightarrow\infty}\sI^k(\pi)\quad\text{subject to}\quad \int_\Theta g(\theta)\pi(\theta)d\theta<\infty.
    \end{equation}
{The subscript ``$k\rightarrow\infty$'' indicates here that the optimization problem is solved asymptotically as $k\rightarrow\infty$, see \cite{VanBiesbroeckBA2023} for more details about this formalism.}
The expression of the above constrained problem necessitates controlling the integrability of the Jeffreys prior by some function $g:\Theta\to(0,\infty)$, such that
    \begin{equation*}
         \int_\Theta J(\theta)g(\theta)^{1/\delta}d\theta <\infty \quad\text{and} \quad \int_\Theta J(\theta)g(\theta)^{1+1/\delta}d\theta <\infty.
    \end{equation*}
{The asymptotic decay rates of the Jeffreys prior are derived in our context in \cite{VanBiesbroeck2023}. We notice that} $g(\theta)=\beta^\eps$, with $\eps\in(0,\frac{2\delta}{1+\delta})$ satisfies the above {constraints}, so that the solution $\pi^\ast$ of Eq.~(\ref{eq:constrained-optim-problem}) is given by $\pi^\ast(\theta)\propto J(\theta)g(\theta)^{1/\delta}$. We can parametrize this prior w.r.t. $\gamma=\eps/\delta$. This yields a range of candidate priors to our problem:
    \begin{equation*}
        \pi^\ast_\gamma(\theta) = \frac{J(\theta)\beta^\gamma}{\int_\Theta J(\tilde\theta)\tilde\beta^\gamma d\tilde\theta},\quad\gamma\in\Big(0,\frac{2}{1+\delta}\Big)\subset(0,2).
    \end{equation*}

While $0<\gamma<2$, this prior satisfies both criteria required for a robust \emph{a posteriori} estimation (Definition \ref{def:robust-estimation}). The case $\gamma=0$ corresponds to the critical case where $\pi^\ast_\gamma$ equals the Jeffreys prior and thus provides a robust \emph{a posteriori} estimation if and only if the likelihood is not degenerate. The limit case $\gamma=2$ never provides a robust \emph{a posteriori} estimation. We precise that the methodology presented in the next section requires that the prior tackles degenerate-likelihood cases. The tuning of this hyper-parameter must be thought as the research for a balance between objectivity ($\gamma$ closer to $0$) and suitability for inference. The influence of $\gamma$ is studied in the practical application of our method in Section~\ref{sec:application}.

For a faster computation of the prior $\pi^\ast_\gamma$ (which is still built on the Jeffreys prior), 
the authors in \cite{VanBiesbroeck2023,Gu2019} suggest to approximate it from its decay rates.
The ones of $\pi^\ast_\gamma$ can be derived from the ones of Jeffreys, which are elucidated in \cite{VanBiesbroeck2023}. We thus suggest the following approximation of $\pi^\ast_\gamma$:
\begin{equation}\label{eq:tilde-pi-gamma}
    \tilde\pi^\ast_\gamma(\theta) \propto \frac{1}{\alpha(\beta^{1-\gamma}+\beta^{3-\gamma})} \exp\left(-\frac{(\log\alpha-\mu_A)^2}{2\sigma_A^2+2\beta^2}\right)
\end{equation}
where $\mu_A$ and $\sigma_A$ respectively denote the mean and the standard deviation of the r.v. $\log A$. We let the reader notice that the distribution of the IM {---which is approximated here by a lognormal distribution---} is involved in the prior expression. 
As a matter of fact, its presence within the likelihood expression (Eq.~(\ref{eq:likelihood})) propagates within the derivation of the Jeffreys prior, and so to its decay rates, as proven in \cite{VanBiesbroeck2023}.

The close connection that exists between the prior distribution of $\alpha$ and the one of $A$ emphasizes the need of taking into account a realistic distribution of the IM. {Approximating the IM distribution by a lognormal distribution is consistent with the distributions of the real and artificial signals of our study (see \cite{VanBiesbroeck2023,VanBiesbroeckUNCECOMP2023} and Fig.~\ref{fig:IM-density}). Some studies also show that a lognormal distribution is compatible with a scenario involving near-source ground motions \cite{Yamada2009}. In addition, this assumption is adopted for the development of ground motion prediction equations (GMPEs) \cite{Abrahamson1992,CB2014}, from which artificial signals can be generated \cite{ZENTNER2014}. The scope of application of the prior defined in Eq.~(\ref{eq:tilde-pi-gamma}) is therefore quite broad.}

Eventually, the prior and thus the posterior are known up to a multiplicative  constant. While Definition \ref{def:robust-estimation} is satisfied by the data and the prior, it is possible to generate estimates of $\theta$ from its posterior distribution through Markov Chain Monte Carlo methods.

\section{Sequential design of experiments%
}\label{sec:PEmethod}

\subsection{Methodology description}

The priors defined in Eq.~(\ref{eq:tilde-pi-gamma}) handle the degeneracy of the likelihood to provide robust \emph{a posteriori} estimation of the fragility curve. However, it remains important to reduce the occurrence of this phenomenon because: %
    \begin{enumerate}
        \item[(i)] {even if it partially disappears thanks to the slightly informed priors we suggest, the possibility of obtaining fragility curve estimates whose parameter $\beta\to 0$ (i.e., of obtaining unrealistic fragility curves) still exists with very small data sizes. As mentioned earlier, only a ``very well-informed'' prior distribution in the neighborhood of $\beta\to 0$ can avoid this phenomenon with certainty. However, its use is not desirable in order not to penalize the learning in cases where the effective value of $\beta$ is indeed very close to zero.}
        \item[(ii)] by nature, a likelihood becomes degenerate consequently to a lack of information within the observed data samples. Avoiding degeneracy should thus lead to a better understanding of the structure's response and its fragility curve.
    \end{enumerate}

In this paper, we propose to tackle degeneracy with an appropriate DoE strategy. Suppose that we have observed the sample $(\mbf z^k,\mbf a^k)$, we need to create a criterion to choose the next input $a_{k+1}$. 
Given our analysis (statement (ii) above) from an information theory viewpoint, the strategy to build the mentioned criterion is to maximize the impact that the selected observation would have on the posterior distribution. 
This strategy echoes the one that supports the reference prior definition as a maximal argument of the mutual information, so that we suggest a similar criterion to select $a_{k+1}$:
      \begin{align}\label{eq:index}
        &a_{k+1}=\argmax_{a\in\cA}\cI_{k+1}(a);\\ \quad 
         &   \cI_{k+1}(a_{k+1}) = \EE_{{z_{k+1}}|\mbf a^{k+1},\mbf z^k}[D_\delta(p(\theta|\mbf z^k,\mbf a^k)||p(\theta|\mbf z^{k+1},\mbf a^{k+1}))].\nonumber
    \end{align}
The index $\cI_{k+1}$ can be seen as a sensitivity index measuring the sensitivity of the posterior w.r.t. the data \cite{DaVeiga2015}. 
Its sequential maximization amounts to maximize the impact of the data to the posterior.
Our index can be seen as derived from the popular framework of Stepwise Uncertainty Reduction techniques \cite{Bect2018}. 
Typically,  those methods are formulated with an \emph{a posteriori} variance within the expectation in Eq.~(\ref{eq:index}) instead of the dissimilarity measure suggested here.
Algorithm \ref{alg:PE} presents a practical pseudo-code of our methodology. It requires to derive an approximation of the index $\cI_{k+1}$, which we elucidate in the following section.

\renewcommand{\algorithmicrequire}{\textbf{Notations:}}

 \begin{algorithm*}
		\caption{Design of experiments}
		\begin{algorithmic}
            \REQUIRE \begin{tabular}[t]{l}
			Seismic signal : $\cS$\\ Intensity measure of the seismic signal: $\mathrm{IM}(\cS)$\\ Mechanical response to the seismic signal (failure or success): $\mathrm R(\cS)$ 
			\end{tabular}\renewcommand{\algorithmicrequire}{\textbf{Initialization:}}
            \REQUIRE \begin{tabular}[t]{l}
			$k_0>0$ (in practice $k_0=2$), initial seismic signals $\cS_1,\dots,\cS_{k_0}$\\
            Define the initial data: $\mbf z^{k_0}=(\mathrm{R}(\cS_1),\dots,\mathrm{R}(\cS_{k_0})),\   \mbf a^{k_0}=(\mathrm{IM}(\cS_1),\dots,\mathrm{IM}(\cS_{k_0}))$
			\end{tabular}
            \FOR{$k=k_0\dots k_{\text{max}}-1$ }
	            \STATE Approximate $\cI_{k+1}$ via Monte Carlo sampling
                \STATE Compute $a_{k+1}=\argmax_a\cI_{k+1}(a)$
	            \STATE Choose a seismic signal $\cS_{k+1}$ such that $\mathrm{IM}(\cS_{k+1})=a_{k+1}$
                \STATE {Perform the %
                experiment and} define $z_{k+1}=\mathrm{R}(\cS_{k+1})$
	        \ENDFOR
		\end{algorithmic}
	    \label{alg:PE}
	\end{algorithm*}

\subsection{Approximation of the index}

Let us adopt the short notation
    \begin{equation*}
        \Psi^z_a(\theta)=\Phi\left(\frac{\log a-\log\alpha}{\beta}\right)^z\left(1-\Phi\left(\frac{\log a-\log\alpha}{\beta}\right)\right)^{1-z}.
    \end{equation*}
The index to maximize has the following form:
    \begin{align*}
        &\cI_{k+1}(a_{k+1})\\ &= (\delta(\delta-1))^{-1}
            \sum_{z\in\{0,1\}}
                \PP({z_{k+1}}=z|\mbf a^{k+1},\mbf z^{k})
                \Delta_{k+1}(\mbf a^{k+1},\mbf z^{k+1})
    \end{align*}
with 
        $\PP({z_{k+1}}=z|\mbf a^{k+1},\mbf z^{k+1}) = \int_{\Theta}\Psi^{z}_{a_{k+1}}(\theta)p(\theta|\mbf z^{k},\mbf a^{k})d\theta, $
and
    \begin{align*}
        &\Delta_{k+1}(\mbf a^{k+1},\mbf z^{k+1})\\ &=
            \int_{\Theta}\left(\frac{p(\theta|\mbf z^{k},\mbf a^{k})}{p(\theta|\mbf z^{k+1},\mbf a^{k+1})}\right)^{\delta}p(\theta|\mbf z^{k+1},\mbf a^{k+1})d\theta \nonumber\\
        &=\int_\Theta \left(\frac{\int_\Theta p(\vartheta|\mbf z^k,\mbf a^k)\Psi^{z_{k+1}}_{a_{k+1}}(\vartheta)d\vartheta}{\Psi^{z_{k+1}}_{a_{k+1}}(\theta) } \right)^\delta \frac{\Psi^{z_{k+1}}_{a_{k+1}}(\theta)p(\theta|\mbf z^k,\mbf a^k) }{\int_\Theta p(\vartheta|\mbf z^k,\mbf a^k)\Psi^{z_{k+1}}_{a_{k+1}}(\vartheta)d\vartheta} d\theta\\
        &= \left(\int_\Theta \Psi^{z_{k+1}}_{a_{k+1}}(\theta) p(\theta| \mbf z^k, \mbf a^k) d\theta\right)^{\delta-1}  \int_\Theta \Psi^{z_{k+1}}_{a_{k+1}}(\theta)^{1-\delta} p(\theta|\mbf z^k,\mbf a^k)  d\theta.\nonumber
    \end{align*}
The index can thus be approximated via Monte-Carlo from a sample of $\theta$ distributed according to a preceding posterior distribution {$p(\theta|\mbf z^q,\mbf a^q)$, with $q\leq k$; $q$ can equal $0$ in which case the sample is distributed w.r.t. the prior. In practice, it is suitable to set $q=k$. One can rely on the following formulas:}
    \begin{align*}
        &\int_\Theta\Psi^{z_{k+1}}_{a_{k+1}}(\theta)p(\theta|\mbf z^k,\mbf a^k)d\theta\\ &= \int_\Theta \Psi^{z_{k+1}}_{a_{k+1}}(\theta)\prod_{j=q+1}^k\Psi^{z_{j}}_{a_{j}}(\theta)p(\theta|\mbf z^q,\mbf a^q)\frac{1}{L_q^k}d\theta ,\\
   \text{and}\quad     &\int_\Theta\Psi^{z_{k+1}}_{a_{k+1}}(\theta)^{1-\delta}p(\theta|\mbf z^k,\mbf a^k)d\theta\\ &= \int_\Theta \Psi^{z_{k+1}}_{a_{k+1}}(\theta)^{1-\delta}\prod_{j=q+1}^k\Psi^{z_{j}}_{a_{j}}(\theta)p(\theta|\mbf z^q,\mbf a^q)\frac{1}{L^k_q}d\theta.\nonumber
    \end{align*}
where $L^k_q = \int_\Theta \prod_{j=q+1}^k\Psi^{z_{j}}_{a_{j}}(\theta)p(\theta|\mbf z^q,\mbf a^q)d\theta$ does not depend on $a_{k+1}$.
In this way, if $\theta_1,\dots,\theta_M$ is an independent and identically distributed sample according to the posterior distribution $p(\theta|\mbf z^q,\mbf a^q)$, we can define for $\zeta=1$ and $\zeta=1-\delta$:
{%
    \begin{align*}
        Q_\zeta^{l = 0,1} &= \frac{1}{M}\sum_{i=1}^M\Psi^{l = 0,1}_{a_{k+1}}(\theta_i)^{\zeta}\prod_{j=q+1}^k\Psi^{z_{j}}_{a_{j}}(\theta_i),
    \end{align*}
}%
to approximate easily  $\cI_{k+1}(a_{k+1})$ up to the constant $(L^k_q)^{\delta+1}$:
    \begin{align*}
        \cI_{k+1}(a_{k+1})&\simeq (L^k_q)^{-\delta-1}  %
        (\delta(\delta-1))^{-1} \left((Q^0_1)^{\delta} Q^0_{1-\delta} + (Q^1_1)^{\delta} Q^1_{1-\delta}\right).%
    \end{align*}
The above being proportional to $(\delta(\delta-1))^{-1} \left((Q^0_1)^{\delta} Q^0_{1-\delta} + (Q^1_1)^{\delta} Q^1_{1-\delta}\right)$.
Eventually, $a_{k+1}$ is chosen as the maximal argument of the right-hand side term in the above equation.

\subsection{{Stopping} criterion \label{sec:stopping_crit}}

Algorithm~\ref{alg:PE} consists in a loop where $k$ iterates from $k_0$ to $k_{\max}-1$. The upper value $k_{\max}$ represents the total number of experiments at the end of the campaign. 
In practical studies, that value is limited by the cost of experiments.
As a matter of fact, additional experiments are expected to provide a quantity of information that enhances the estimates. %
Practitioners could judge to cease the campaign if the cost of an experiment overtakes its benefits.

Such quantity of information provided by data samples is derived through our method in the index $\cI$, that is why we suggest to study its variation to elucidate a {stopping} criterion:
\begin{equation*}
    \cV\cI_k = \frac{|\cI_{k+1}(a_{k+1})-\cI_k(a_k)|}{|\cI_k(a_k)|}.
\end{equation*}
When the index $\cV\cI_k$ falls below a certain threshold value, the method has ceased to leverage enough from the observations.
This index can be appreciated alongside the variation of the estimated fragility curve itself:
    \begin{equation*}
            \cV\cP_k = \frac{\|m^{|\mbf z^k,\mbf a^k} - m^{|\mbf z^{k+1},\mbf a^{k+1}}\|_{L^2}}{\|m^{|\mbf z^k,\mbf a^k}\|_{L^2}},
    \end{equation*}
where $m^{|\mbf z^k,\mbf a^k}$ is the median of the fragility curve estimate given the observations $(\mbf z^k,\mbf a^k)$; details concerning the practical definition of the norm $\|\cdot\|_{L^2}$ are given in Section~\ref{sec:metrics}.
When the index $\cV\cP_k $ falls, the estimated fragility curve given by the method has stopped to distinctly evolve.

The index $\cV\cP_k$ constitutes a criterion which is more perceptible {for practitioners}.
In the next section where we apply our method to a practical case study ,{ we verify that $\cV\cI_k$ and $\cV\cP_k$ give pragmatic and consistent information.}

\section{Application of the DoE methodology}\label{sec:application}

{This section is devoted to the application of our methodology to a practical case from the nuclear industry. A comprehensive study of the performance of the method is proposed by considering the two classical IMs that are the PGA and the PSA. This study is further supported by the study of a toy case that is presented in \ref{app:toycases}.}

    \subsection{Case study presentation}

        \subsubsection{Seismic signals generation}\label{sec:seismic-generator}
    
        {The objective of this work being to propose a strategy for designing experiments, it is necessary to rely on a large database of seismic signals to be able to choose them optimally. $10^5$ artificial seismic signals were therefore generated.} They were obtained with the SSG proposed by~\citet{Rezaeian2010}. This generator implemented in~\cite{Sainct2020} was calibrated from 97 real accelerograms selected in the European Strong Motion Database for a magnitude $M$ such that $5.5 \leq M \leq 6.5$, and a source-to-site distance $R < 20$~km~\cite{ESMD}. {From these artificial signals, two different IMs were derived for illustration purposes. These are, on the one hand, the PGA and, on the other hand, the PSA at 5~Hz for a damping ratio of 1\% (in accordance with the equipment's characteristics, see section~\ref{sec:ASG}). In Fig.~\ref{fig:IM-density} the distributions of these two IMs are compared with the ones that come from the 97 real seismic signals. These comparisons show that the synthetic signals have realistic features.}

        \begin{figure*}
            \centering
            \includegraphics[width=5cm]{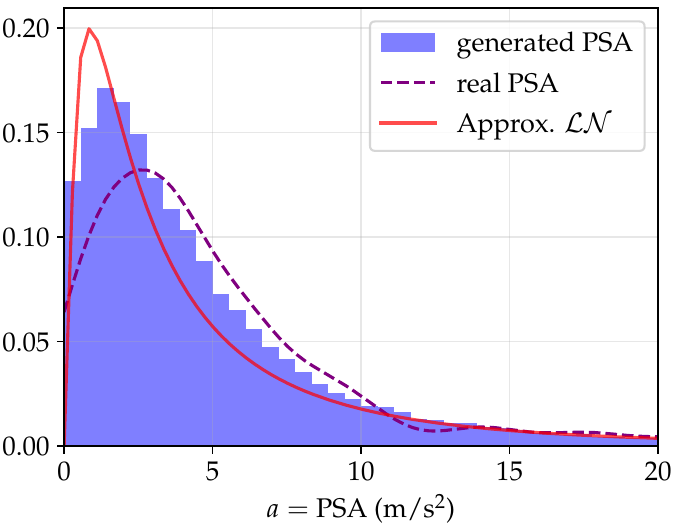}
            \includegraphics[width=5cm]{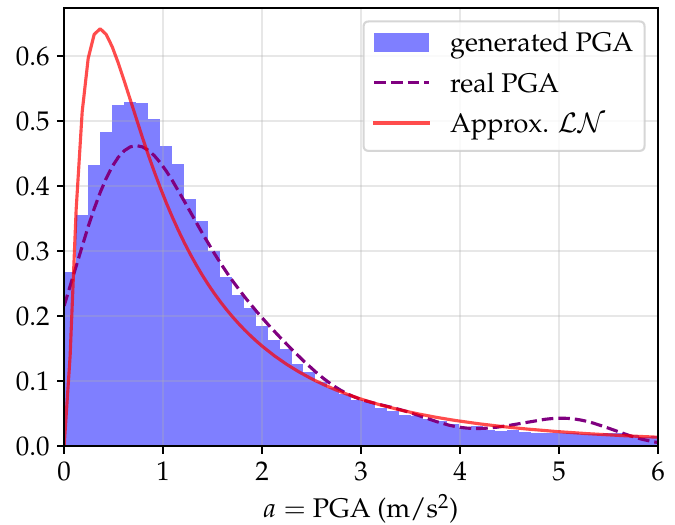}\\[0.7em]
            \begin{minipage}{0.94\textwidth}\footnotesize 
                \itshape Histograms (in blue) of IMs (PSA in left figure and PGA in right figure) derived from $10^5$ generated synthetic signals. They are compared with the densities of IMs coming from real accelerograms estimated by Gaussian kernel estimation (dashed lines), and with the densities of lognormal distributions with same medians and same log deviations (solid lines).
            \end{minipage}
            \caption{Statistical distributions of the PSA and the PGA.}
            \label{fig:IM-density}
        \end{figure*}

        \subsubsection{Description of the mechanical equipment \label{sec:ASG}}

        {In this case study, we investigate the seismic fragility of a piping system {that is part of a French pressurized water reactor} and that was tested on the Azalee shaking table at the EMSI laboratory of CEA/Saclay, as shown in Fig.~\ref{fig:ASG}-left. Fig.~\ref{fig:ASG}-right depicts the finite element model (FEM), based on beam elements and implemented through the proprietary FE code CAST3M~\cite{CAST3M}. The validation of the FEM was carried out thanks to an experimental campaign described in~\cite{TOUBOUL1999}.}

        The mock-up comprises a carbon steel TU42C pipe with an outer diameter of 114.3 mm, a thickness of 8.56 mm, and a 0.47 elbow characteristic parameter. This pipe, filled with water without pressure, includes three elbows, with a valve-mimicking mass of 120 kg, constituting over 30\% of the mock-up's total mass. One end of the mock-up is clamped, while the other is guided to restrict displacements in the X and Y directions. Additionally, a rod is positioned atop the specimen to limit mass displacements in the Z direction (refer to Fig.~\ref{fig:ASG}-right). During testing, excitation was applied exclusively in the X direction.

        \begin{figure}[H]
		\centering		
		\includegraphics[width=4.5cm]{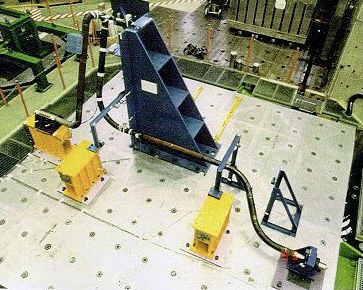}
		\hspace{0.5cm}
		\includegraphics[width=2cm]{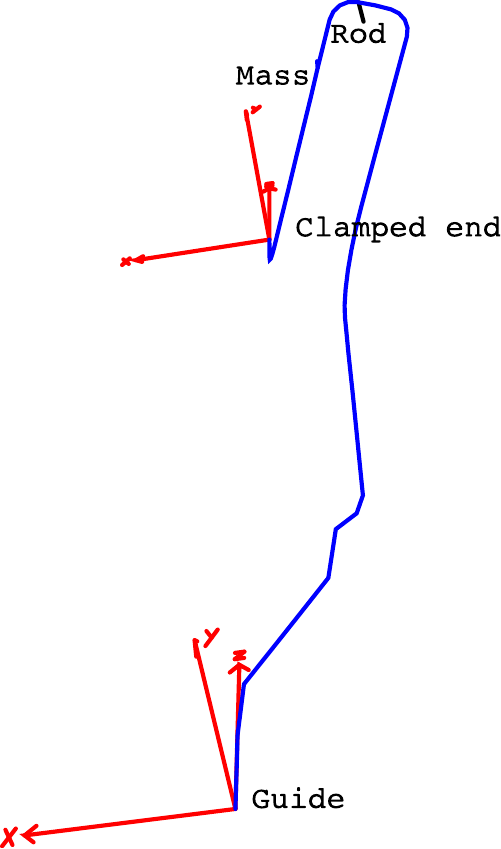} %
        \caption{Overview of the piping system.}
		\label{fig:ASG}
	\end{figure}

        In order to conduct comparative performance studies, numerous simulations have been performed. They were carried out from a subset of $8\cdot10^4$ of the $10^5$ artificial seismic signals. Nevertheless, as in practice the piping system is located in a building, the artificial signals were filtered using a fictitious 2\% damped linear single-mode building at 5 Hz, which corresponds to the first eigenfrequency of the 1\% damped piping system.  In such a situation, for some seismic signals, the behavior of the piping system is nonlinear. Regarding the nonlinear constitutive law of the material, a bilinear law exhibiting kinematic hardening was used to reproduce the overall nonlinear behaviour of the mock-up with satisfactory agreement compared to the results of the seismic tests~\cite{TOUBOUL1999}. Following the recommendation in~\cite{TOUBOUL2006}, we consider that the EDP is the out-of-plane rotation of the elbow near the clamped end of the mock-up. As a result we have a dataset of $8\cdot10^4$ independent tuples of the form (PSA, PGA, EDP). In Fig.~\ref{fig:scattersIMs} are plotted the elements of the datasets (PSA, EDP) and (PGA, EDP), along with different critical thresholds.
        The binary data are defined from the condition that failure happens when the EDP is larger than the threshold value. {It should be remembered that obtaining these results represents approximately 90 days of sequential calculations on a simple laptop. However, such a computational effort cannot be implemented in practice on the scale of an industrial installation where several of its equipment must be evaluated. These results allow us to estimate reference, PGA-based and PSA-based, fragility curves for this case study and allow us to evaluate our methodology.}

        \begin{figure*}
        \centering%
        \includegraphics[width=5cm]{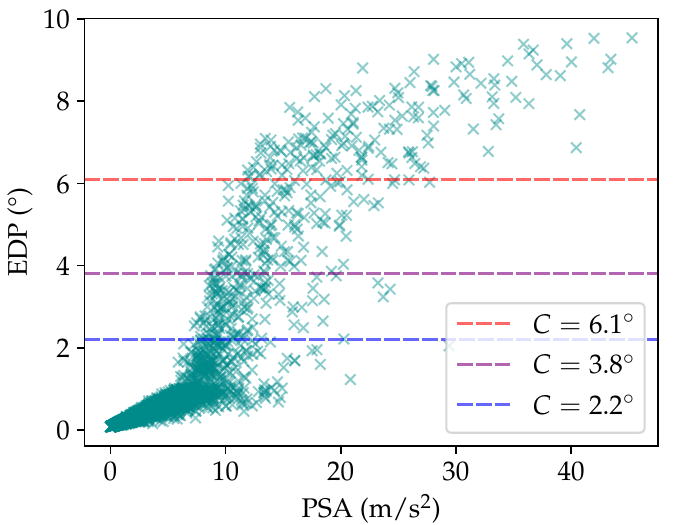}%
        \includegraphics[width=5cm]{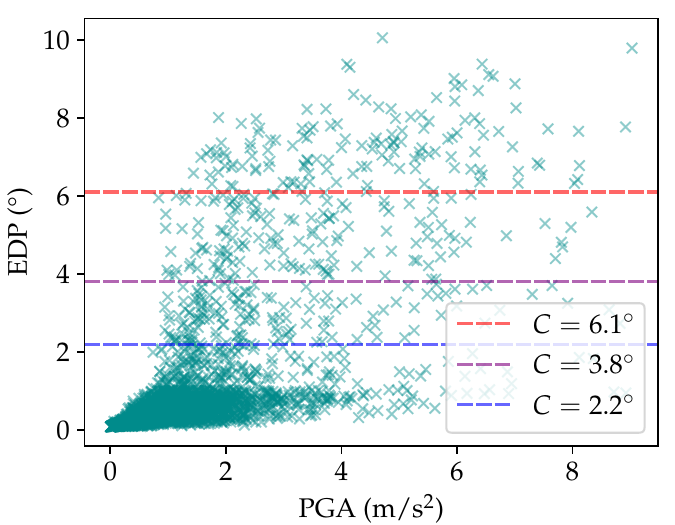}%
        \caption{Results of the $8\cdot10^4$ numerical simulations. Each cross is an element of the dataset (IM, EDP) where the IM is the PSA (left) and the PGA (right). Different critical rotation thresholds $C$ are plotted in dashed lines. They yield different proportions of failures in the dataset: respectively 95$\%$ (red), $90\%$ (purple) and $85\%$ (blue).}
        \label{fig:scattersIMs}
        \end{figure*}
        
        \subsubsection{Reference fragility curves}\label{sec:reference}

        The complete set of $8\cdot10^4$ simulations provides a satisfactory dataset to have a reference of the seismic fragility curve that we aim to estimate. 
        For a robust non-parametric reference, it is possible to derive Monte Carlo estimators of local probability of failures w.r.t. the IM. \citet{TREVLOPOULOS2019} suggest to take clusters of the IM using K-means as the available IM are not uniformly distributed. Such reference comes along with a confidence interval as the dataset is not infinite. In Fig.~\ref{fig:reference-frags}, we compare this non-parametric reference with the parametric fragility curve using the probit-lognormal model (Eq.~(\ref{eq:probit-model})) where $\alpha$ and $\beta$ are estimated by MLE using the full dataset as well. {This comparison is presented considering different critical rotation thresholds $C$ of the equipment for both the PSA and the PGA. First of all, it should be noted that, with the PGA as IM, it is not possible to completely describe the fragility curve. For the maximum PGA values observed, the failure probabilities stagnate between 0.5 and 0.8 depending on the failure criterion considered. Therefore, the PGA is not the most suitable IM of the two. For the type of structure considered here, this point is well documented in the literature. Then, the comparisons demonstrate, in our setting, the adequacy of a probit-lognormal modeling of the fragility curves, even with high failure thresholds.} Its bias with non-parametric fragility curve exists but remains limited. When the number of observations is small, this model bias is negligible in front of the uncertainty on the estimates.
        
        \begin{figure*}
        \centering
        \includegraphics[width=5cm]{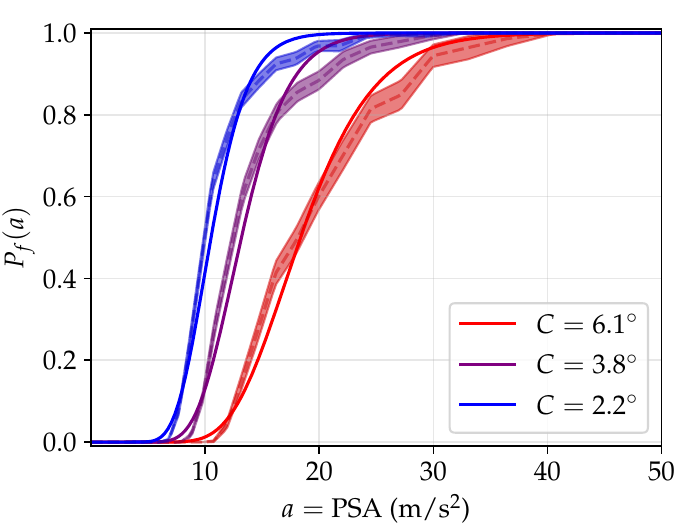}
        \includegraphics[width=5cm]{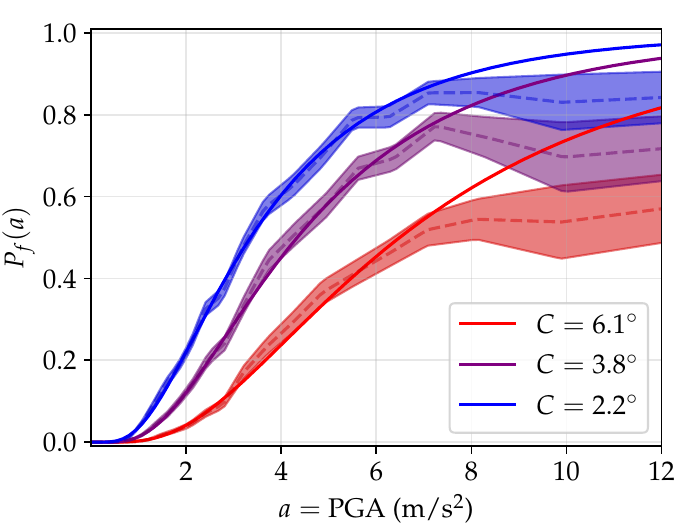}
        \caption{{Reference non-parametric fragility curves obtained via Monte Carlo estimates (dashed lines) surrounded by their $95\%$ confidence intervals, for different critical rotation threshold $C$ with (left) the PSA and (right) the PGA as IM.} The thresholds yield different proportions of failures in the dataset: respectively $95\%$ (red), $90\%$ (purple) and $85\%$ (blue).
        For each value of $C$ are plotted (same color, solid line) the corresponding probit-lognormal MLE.}
        \label{fig:reference-frags}
        \end{figure*}

        {In the sequel, although the PGA is not the most suitable IM for our problem, we present the results of applying our methodology with both the PSA and the PGA, for the sake of completeness. However, and without loss of generality, the critical rotation threshold is set at $C = 3.8^{\circ}$. This represents the 90\%-level quantile derived from the sample of $8\cdot10^4$ nonlinear numerical simulations.}

        \subsubsection{{Sequential design of experiments} {implementation}}\label{sec:SDoE}

        %
        %

        %
        {The sequential DoE methodology presented in Algorithm \ref{alg:PE} involves (i) generating new seismic signals from a precise value of their IMs and (ii) performing the associated numerical simulations or experimental tests, depending on the context. In this case, we rely on the large database that we have built up for comparative performance studies, but that does not call into question the process of selection of the ``optimal'' signals. It simply avoids having to perform new numerical simulations.}
        
        Let us denote $\cS_1,\dots,\cS_N$, $N=8\cdot10^4$ the synthetic signals that constitutes the aforementioned dataset. If
        $\cS_{i_1},\dots,\cS_{i_k}$ are the ones from which result the current observations in our method, and if we look for the generation of $\cS_{i_{k+1}}$ such that its IM equals $a_{k+1}$,
        then we select $\cS_{i_{k+1}}$ such that
            \begin{equation*}
                i_{k+1} = \argmin_{\substack{i\in\{0,\dots,N\}\\ i\not\in\{i_1,\dots,i_k\}}} |a_{k+1}-\mathrm{IM}(\cS_{i})|;
            \end{equation*}
        where $\mathrm{IM}(\cS_{i})$ denotes the IM of signal $\cS_{i}$.

    \subsection{{Fragility curves estimations, benchmarking metrics and model 
    bias}}\label{sec:metrics}
    
    \subsubsection{{Fragility curves estimations}}\label{sec:estimations}

    Estimations of seismic fragility curves given observation $(\mbf z^k,\mbf a^k)$ are obtained by sampling independent and identically distributed values of $\theta$ from the posterior distribution $p(\theta|\mbf z^k,\mbf a^k)$. That sampling can be done via MCMC methods. In our work we use an adaptive Metropolis-Hastings algorithm \cite{Haario2001} which necessitates iterative evaluations of the posterior up to a constant.
    For a given value of $\gamma$, we can evaluate the posterior from the approximation $\tilde\pi^\ast_\gamma$ suggested in Eq.~(\ref{eq:tilde-pi-gamma}). In \cite{VanBiesbroeck2023}, a more exact but more expensive computation of the Jeffreys prior is proposed. We precise that we have verified that the results obtained from their method to compute $\pi^\ast_\gamma$ or from 
    the approximation $\tilde\pi^\ast_\gamma$ are indiscernible. Their comparison is not the point of this study and is not discussed in the following.

    \subsubsection{{Benchmarking metrics}}\label{sec:Bmetrics}
    
    To evaluate our estimates and the performances of our methodology, we propose to define metrics on the \emph{a posteriori} fragility curves that is expressed conditionally to $\theta$.
    The latter can be defined as the
    random process $a\mapsto P_f^{|\mbf z^k,\mbf a^k}(a)$ where  $P_f^{|\mbf z^k,\mbf a^k}(a)=\Phi(\beta^{-1}\log a/\alpha)$. It has a distribution that naturally inherits from the posterior distribution of $\theta$.
    For each value of $a$, we note $q_r^{|\mbf z^k,\mbf a^k}(a)$ the $r$-quantile of $P_f(a)$, and $m^{|\mbf z^k,\mbf a^k}(a)$ its median.
    These are defined for any $a\in\cA=[0,a_{\text{max}}]$, where $a_{\text{max}}$ corresponds to the highest value of the IM that exists in the dataset described in Section~\ref{sec:seismic-generator} ($a_{\text{max}}=12$m/s$^2$ for the PGA, and  $a_{\text{max}}=60$m/s$^2$ for the PSA).
    We define:
    \begin{itemize}
        \item The square bias to the median: $\cB^{|\mbf z^k,\mbf a^k} = \|m^{|\mbf z^k,\mbf a^k}- P_f^{\mathrm{ref}}\|^2_{L^2}$, where %
        $P^{\mathrm{ref}}_f$ denotes the reference non-parametric fragility curve computed as described in Section~\ref{sec:reference}.
        \item The quadratic error: $\cE^{|\mbf z^k,\mbf a^k}=\EE_{\theta|\mbf z^k,\mbf a^k}\left[\|P_f^{|\mbf z^k,\mbf a^k}-P_f^{\mathrm{ref}} \|^2_{L^2}\right].$
        \item The $1-r$ square credibility width: $\cW^{|\mbf z^k,\mbf a^k}= \|q_{1-{r/2}}^{|\mbf z^k,\mbf a^k} - q_{{r/2}}^{|\mbf z^k,\mbf a^k}\|_{L^2}^2$.
    \end{itemize}
    
    The norm $\|\cdot\|_{L^2}$ is defined by 
        \begin{equation}
            \|P\|_{L^2}^2 = \frac{1}{\tilde a_{\max}-\tilde a_{\min}}\int_{\tilde a_{\min}}^{\tilde a_{\max}}P(a)^2da. \nonumber
        \end{equation}
    For a consistent quantification of the above errors no matter the selected IM, $\tilde a_{\min}$ and $\tilde a_{\max}$ are selected to match a quantile of the reference fragility curve:
    we fix $0<q_1<q_2<1$ and choose $\tilde a_{\min}$ and $\tilde a_{\max}$ such that $q_1=P^{\text{ref}}_f(\tilde a_{\min})$ and $q_2=P^{\text{ref}}_f(\tilde a_{\max})$. In our work, the quantities defined in this section are derived considering the PGA and the PSA as IMs.
    The domain of the reference fragility curve is more limited in the case of the PGA (see Fig.~\ref{fig:reference-frags}): in this case the reference probability of failure lives between around $10^{-3} $ and $0.9$.
    Thus, we fixed the values of $q_1$ and $q_2$ to match these in any case. The implementation of these metrics is done through Monte-Carlo derivation and numerical approximation of the integrals from Simpsons' interpolation on a sub-division of $[\tilde a_{\min},\tilde a_{\text{max}}]$.

    \subsubsection{{Model bias}}\label{sec:modelbias}

    {In practice, the probit-lognormal model is susceptible to be biased. So, one of the metrics for evaluating its performance is to assess the square bias to the median, $\cB^{|\mbf z^k,\mbf a^k}$, defined in the previous section. To facilitate the interpretation of the values provided by this metric, we propose estimating the square model bias, $\cM\cB$, which corresponds to the deviations between $P_f^{\text{ref}}$ and $P_f^{\text{MLE}}$. It can be visualized in Fig.~\ref{fig:reference-frags} and is defined by:
        \begin{equation}
           \cM\cB = \|P_f^{\text{ref}}-P_f^{\text{MLE}}\|_{L^2}^2. \label{eq:MB}
        \end{equation}   
    $P_f^{\text{MLE}}$ denotes a probit-lognormal curve whose parameter $\theta$ is obtained by MLE given the full batch of $8\cdot10^4$ simulations. {Obviously, this curve would not be known in a real situation since its knowledge is equivalent to the knowledge of $P_f^{\text{ref}}$. It serves the assessment of our method: $P_f^{\text{MLE}}$ is expected to approach the ``best possible'' curve among probit-lognormal ones for this case study.} 
     By construction, the model bias is influenced by the distribution of the IM since $P_f^{\text{MLE}}$ is estimated by considering all the available samples. We cannot therefore speculate whether this is the absolute minimum bias and its value is only presented for information purposes. We return to this point in the section devoted to the interpretation of the results.}
    
    \subsection{Numerical results}\label{sec:numerical-results}

    {This section is devoted to the presentation of the numerical results first from a qualitative point of view, then from a quantitative one. To have an overall view of the performance of our methodology, we compare the results of (i) the so-called ``standard approach'' which consists in sampling the IM of interest according to its standard distribution considering the Jeffreys prior (i.e., the prior defined in Eq.~(\ref{eq:tilde-pi-gamma}) for $\gamma$ equal to 0) and (ii) the DoE approach which has been implemented for different values of $\gamma$.
    
    It should be noted that we only make comparisons with the so-called standard approach because it has been demonstrated in previous works the superiority of this approach compared to MLE \cite{VanBiesbroeck2023,VanBiesbroeckUNCECOMP2023}.}

    \begin{figure*}
        \centering%
        \includegraphics[width=5cm]{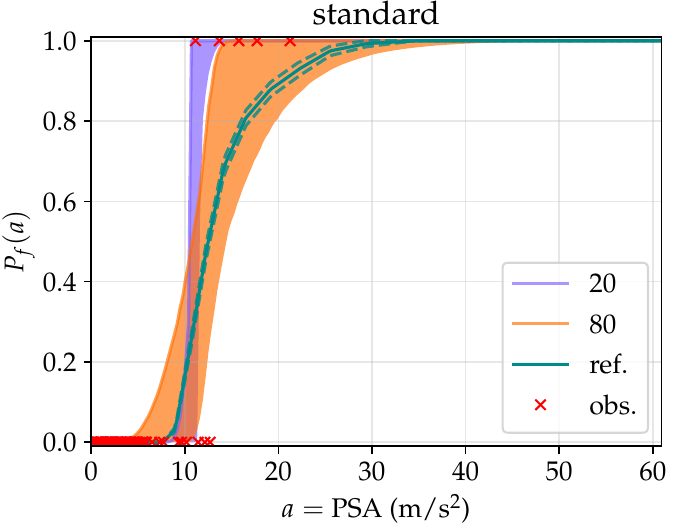}%
        \includegraphics[width=5cm]{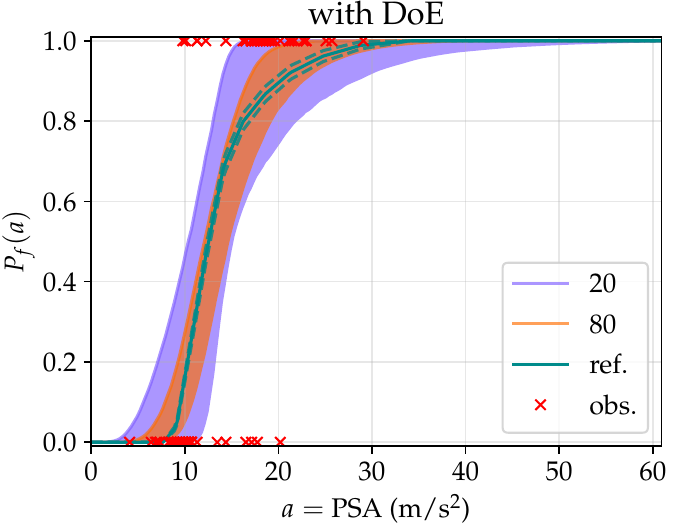}\\[5pt]
        \includegraphics[width=5cm]{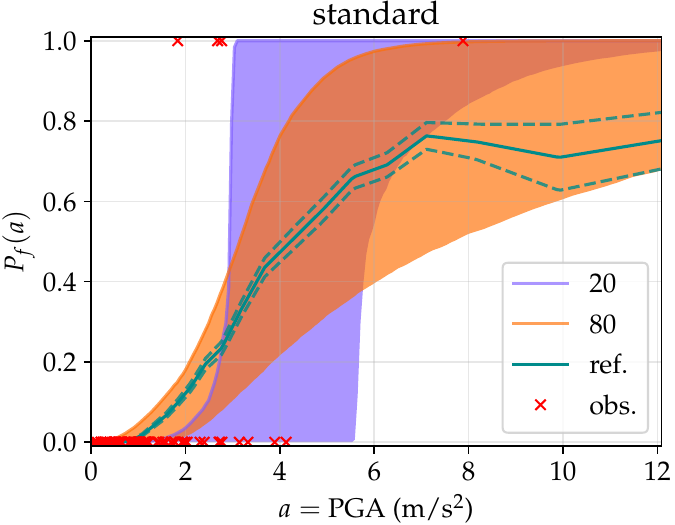}\includegraphics[width=5cm]{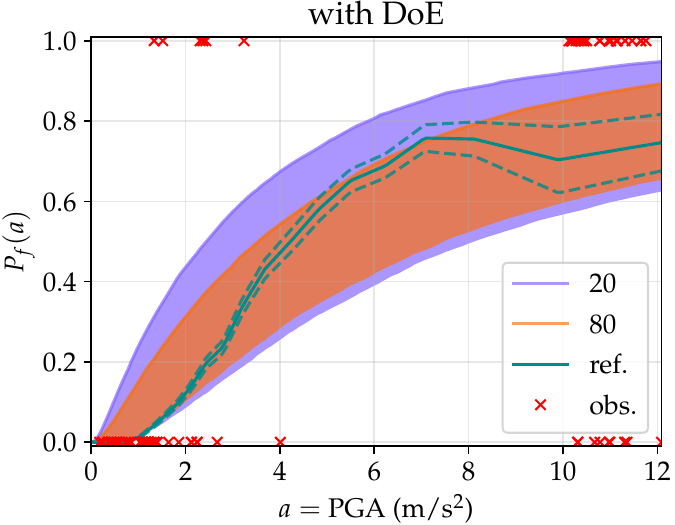}\\[0.7em]%
        \begin{minipage}{0.94\textwidth}
            \itshape\footnotesize 
            Top figures: the IM is the PSA; bottom figures: the IM is the PGA. Left figures: the seismic signals are chosen w.r.t. their standard distribution. Right figures: they are chosen w.r.t. our DoE strategy.
            On each figure: credibility intervals from $20$ (purple) or $80$ (orange) observations, reference fragility curves (green solid line) from Monte Carlo simulations and their {$95\%$}-confidence intervals (green dashed lines). The red crosses represent the $80$ observations.
        \end{minipage}
        \caption{Examples of fragility curves estimation.}
        %
        \label{fig:ex-estfrag}
    \end{figure*}

    \subsubsection{Qualitative study}
    
    In Fig.~\ref{fig:ex-estfrag} we present examples of \emph{a posteriori} fragility curve estimations. They take the form of {$95\%$-credibility intervals. These qualitative graphs allow us to appreciate the performance of the estimation when the IMs are selected using our DoE method compared to the standard approach. In this example, the DoE was implemented by setting $\gamma=0.5$.} The same comments stay valid for any of the two IMs we have taken into account in this study ---the PSA and the PGA---, namely: 
    \begin{enumerate}
        \item[(i)] When the number of observed data is really small (20 samples), the standard method tends to provide ``vertical'' fragility curves estimates, which result in ``vertical'' credibility intervals. By verticality, understand here that the tangent of the estimated curve at the median has an infinite slope (i.e., $\beta=0$). This phenomenon is a consequence of the fact that the likelihood is degenerate in this case.
        The credibility intervals are thus tighter but strongly incorrect.
        When the DoE is implemented with the same number of samples, the phenomenon fades and the credibility intervals follow accurately the reference curve.
        \item[(ii)] When a more decent number of data are observed (80 samples), the credibility intervals given from the DoE follow quite closely the reference curve. Within the observed seismic signals, we then notice a better balance between the ones with large IMs and the ones with small IMs compared to the standard method {(see the ``red crosses'' in the figures).} The consequence is a thinner credibility interval given by the DoE method.
    \end{enumerate}

\subsubsection{Quantitative study}\label{sec:quantitative-study}

Figures \ref{fig:errors-psa} to \ref{fig:variaP} present 
    quantitative results that go beyond a single example.
    Within those, 12 methods are compared:
        the standard method and the DoE methods for any $\gamma\in\{0,$ $0.1,$ $0.3,\dots,1.9\}$.
    The case $\gamma=0$ is particular: in this case the prior $\pi^\ast_\gamma$ does not always satisfy the criteria for a robust \emph{a posteriori} estimation (Definition~\ref{def:robust-estimation}) even though that is essential for the DoE to be implemented. Thus, $\gamma=0$ corresponds to an alteration of the DoE method:
        first, data are collected given the DoE method carried out with $\gamma=0.1$; second, the prior $\pi^\ast_{0.1}$ is replaced with $\pi^\ast_0$ to compute the estimates.
    This particular treatment is carried out in order to quantify how that parametrized constraint affects the estimates.

    {For each of these methods, we have carried out $100$ replications of the same numerical experiment, namely: (i) generate an observed sample of $k_{\max}=250$ data items and (ii) derive for any $k=10,$ $20,\dots,k_{\max}$ the posterior distribution $p(\theta|\mbf z^k,\mbf a^k)$.
    These replications provide evaluations of the mean values of the metrics defined in Section~\ref{sec:metrics}. The results are shown in Fig.~\ref{fig:errors-psa} for the PSA and in Fig.~\ref{fig:errors-pga} for the PGA.}

    \paragraph{{Overall comments : influence of $\gamma$}}
    
    {For both IMs, Figures~\ref{fig:errors-psa} and \ref{fig:errors-pga} show that the discrepancies between all the DoE-based results are negligible. We do not clearly distinguish a hierarchy of these with respect to the value of $\gamma$.}
    We have verified that the selected IMs by the DoE methods are actually poorly influenced by the value of $\gamma$ and that any change of performance impacted by the value of $\gamma$ is hard to distinguish. We conclude that its influence on the posterior estimation remains small. Thus, our constrained reference prior is close enough to the unconstrained ($\gamma = 0$) one and conserves its ``objectivity'' qualification.

\begin{figure*}
    \centering%
    \includegraphics[width=5cm]{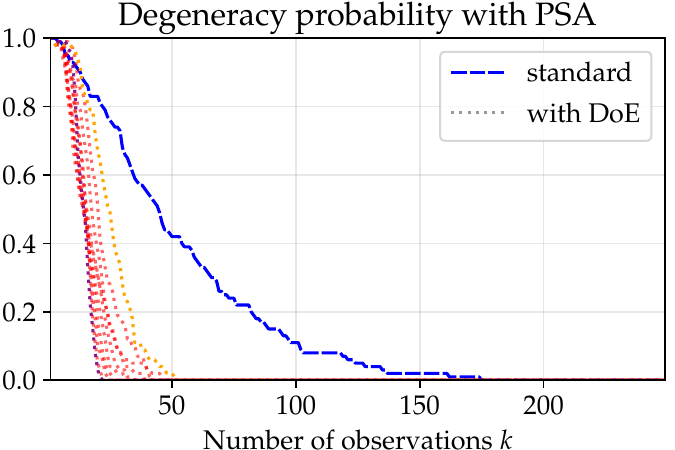}\ 
    \includegraphics[width=5cm]{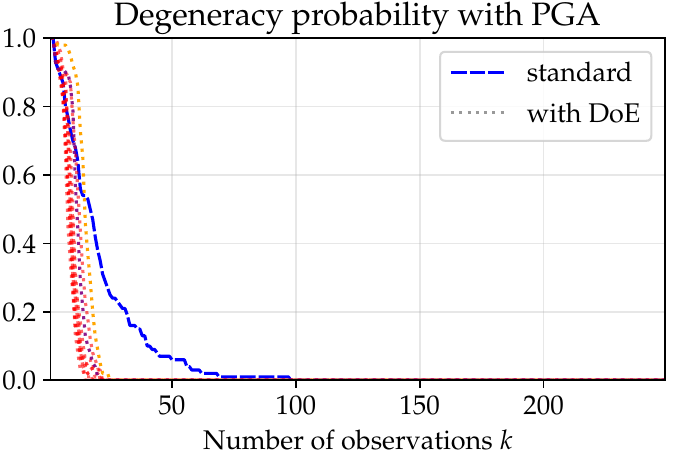}\\[0.7em]%
    \begin{minipage}{0.94\textwidth}\itshape\footnotesize 
        Left: the considered IM is the PSA. Right: it is the PGA.
    The degeneracy probability without DoE (blue dashed line) is compared with the ones with DoE (dotted lines), for different values of $\gamma$. 
    Two extreme values of $\gamma$ are highlighted: $\gamma=0.1$ (purple dotted line) and $\gamma=1.9$ (orange dotted line).
    \end{minipage}
    \caption{Probability that a sample of size $k$ yields a degenerate likelihood, as a function of $k$. }
    \label{fig:errors-degen}
    \end{figure*}

    \paragraph{Degeneracy disappearance} 

    {Definition~\ref{def:degeneracy} lists the different types of situations for which likelihood degeneracy can occur. For our case study, Fig.~\ref{fig:errors-degen} presents the average number of occurrences of the three types of situations that lead to a degenerate likelihood, as a function of the number of observed samples. We observe that the DoE method clearly outperforms in reducing the degeneracy for small numbers of observations compared with the standard method (from $k\simeq 20$ in the case of the PGA, and at most $k\simeq 50$ in the case of the PSA). Echoing this observation, we notice that this probability is, in general, higher with the PSA as IM, especially when the seismic signals are selected according to the standard method.

{To interpret these results, it is necessary to understand the fundamental difference between the PSA and the PGA. On the one hand, since the PSA is the indicator that is the most correlated with the structure’s response, the value of $\beta$ of the associated fragility curve is lower than with the PGA. With an ideal indicator ---which does not exist in practice unless the structure's response is known, which is meaningless--- this value would be equal to 0. On the other hand, the PSA distribution has a larger standard deviation than that of the PGA.

Thus, in the standard case, whether with the PSA or the PGA, as we have considered a high failure threshold, the main cause of degeneracy comes from the fact that the observed data are of type 1, that is to say they do not reveal any failure. With the PSA, the value of $\beta$ of the fragility curve being lower than with the PGA, while its distribution has a larger standard deviation, this favors a type 3 degeneracy. This explains a higher probability of degeneracy with the PSA than with the PGA.

To a lesser extent, this also affects the DoE method, although it has been shown in \ref{app:toycases} that, in the ideal case where the fragility curve is lognormal and we have data that exactly match the IM values proposed by the DoE algorithm, all cases are equivalent from the point of view of estimating $\theta$. That is, in an ideal case, we should observe no difference from the point of view of the efficiency of the DoE algorithm, whether the IM is the PGA and the PSA, or another IM.}

    \paragraph{Performances with few observations} {Focusing first on the quadratic errors presented in Figures~\ref{fig:errors-psa} and \ref{fig:errors-pga}, we show that when the number of observations $k$ is small, the outperformance of the DoE method over the standard one is clearly visible for any IM considered.}
    Indeed, on the domain $k<50$, the decrease of any of the errors is faster, reaching satisfying values at $k=50$, which is consistent with the qualitative study of the previous section.
    Regarding the cases where $k>50$, the behavior of the metrics differ between the case where the PSA is used and the one where it is the PGA. Globally, it is obvious that the PSA provides the best results. The consideration of this IM rather than the PGA is more relevant for this case study. {In the following paragraphs, we discuss more specifically the estimation biases for both IMs, since the quadratic error is a combination of the bias and the credibility width.}

    \paragraph{Study of the bias when the IM is the PSA}
    As mentioned in Section~\ref{sec:reference}, the PSA is the best of the two IMs in our case study. As a result, the performances of the DoE method is beyond doubt (Fig.~\ref{fig:errors-psa}). Up to values of $k$ close to $50$, the decrease in the estimation bias is rapid towards the so-called model bias value which is materialized in the figure (see Section~\ref{sec:metrics}). This bias reflects the fact that the reference fragility curve, $P_f^{\text{ref}}$, does not correspond to an exact probit-lognormal curve as illustrated in Fig.~\ref{fig:reference-frags}. Beyond a value of $k$ close to $100$, the bias stops decreasing significantly because it tends towards the model bias. From this threshold, additional observed data only make the credibility intervals thinner around the median. It is noted in this regard that the DoE method is able to provide estimates with a bias that is smaller than the so-called model bias. On average, the DoE-based estimations do not exactly match the model bias, although it is very close. This is not the case for the standard method, for which there is almost an order of magnitude difference, even with a sample size of $250$. The difference between the bias obtained on average by the DoE method and the model bias simply reflects the fact that the distributions of the data used for the estimates are not the same in the two cases. $P_f^{\text{MLE}}$ is estimated on the entire available database while the DoE methodology aims to select some of them by maximizing the criterion defined in Eq.~(\ref{eq:index}). \ref{app:toycases} provides more insight into the DoE approach.

\paragraph{Study of the bias when the IM is the PGA}
{As mentioned in section~\ref{sec:reference}, with the PGA as IM, it is not possible to fully describe the fragility curve. We therefore have no information on its evolution beyond the maximum value of the PGA observed, at about $12$ m/s$^2$. Moreover, the discrepancies between $P_f^{\text{ref}}$ and $P_f^{\text{MLE}}$ are maximal for a PGA value of the order of $8$ m/s$^2$, whereas before this value the fit is very good.} For the strongest seismic signals at disposal, equipment failure is only observed in around 70\% of cases.

{As shown in Fig.~\ref{fig:errors-pga}}, up to values of $k$ close to $100$, the DoE method outperforms the standard method. Beyond that, the behavior is different from that observed when the IM is the PSA and an evaluation of the performances of the DoE method is less obvious. On average, however, with the first $40$ data points, the bias value obtained with the PGA is smaller than with the PSA.

{In \ref{app:toycases} it is shown that with the DoE method, the distribution of the selected data tends to follow
a bimodal distribution whose modes are equally distant from the logarithm of the median of the fragility curve
(see also Fig.~\ref{fig:ex-estfrag} and the repartition of the ``red crosses''). Since knowledge of the fragility curve over a restricted domain is likely to reduce the performance of the DoE method, a thorough study of the consequences of such a restriction is proposed in \ref{app:toycases}, in a case where the model bias does not exist. This study shows that the performance of the DoE method is weakly affected by this limitation. We can therefore postulate that it is the presence of a significant bias in one of the two main domains in which the DoE method selects data that is the cause of the deterioration in the method's performance for values of $k > 100$, compared to the standard approach. If we believe the results in Fig.~\ref{fig:ex-estfrag}, with $80$ data, we obtain however a completely reasonable estimate, taking into account the observed bias.}

\begin{figure*}
        \centering%
        \hspace*{-1cm}%
        \includegraphics[width=5cm]{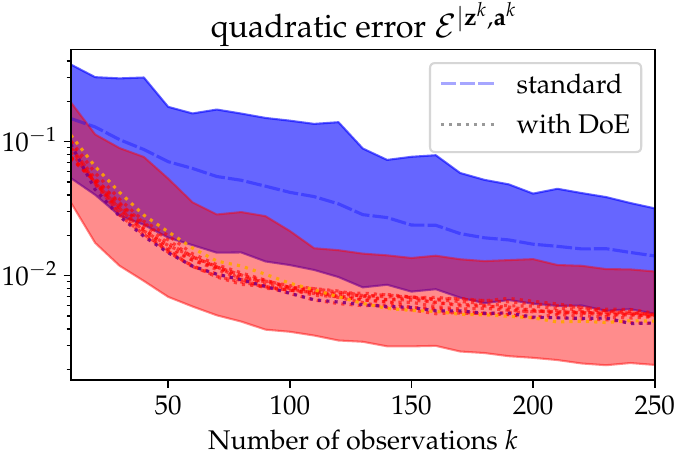}\includegraphics[width=5cm]{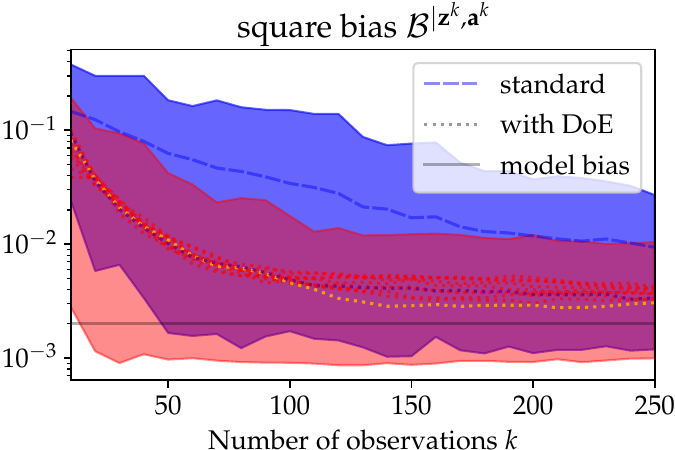}\includegraphics[width=5cm]{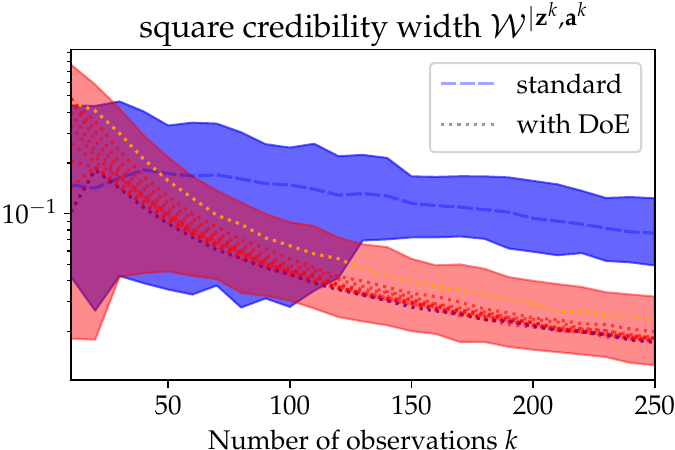}\\[0.7em]%
        \begin{minipage}{0.94\textwidth}\footnotesize\itshape 
            In dashed blue lines: the average errors using the standard distribution of the IM surrounded by their $95\%$-confidence interval. In dotted lines: the average errors using our DoE strategy with different values of $\gamma$. They are surrounded (in red) by the maximal $95\%$-confidence intervals. Two extreme values are emphasized: $\gamma=0$ (orange) and $\gamma=1.9$ (purple).
        On the middle figure, the model square bias is plotted as a gray line.
         The IM  is the PSA here.  
        \end{minipage}
        \caption{Average error metrics derived from numerous replications of the method, as a function of the size of the observed sample.}
        %
        %
        \label{fig:errors-psa}
    \end{figure*}

    \begin{figure*}
        \centering
        \hspace*{-1cm}%
        \includegraphics[width=5cm]{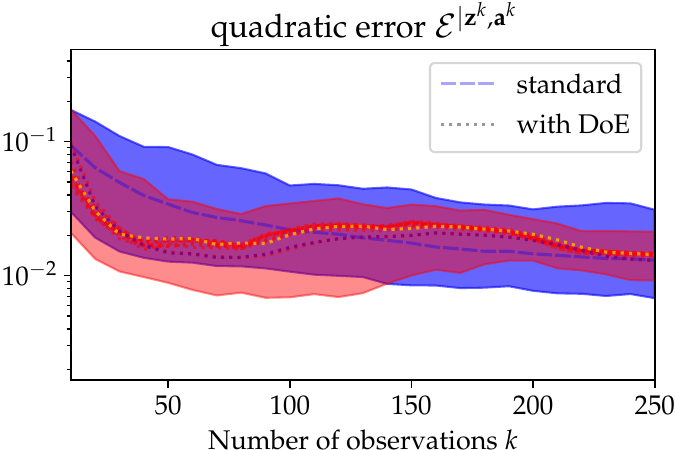}\includegraphics[width=5cm]{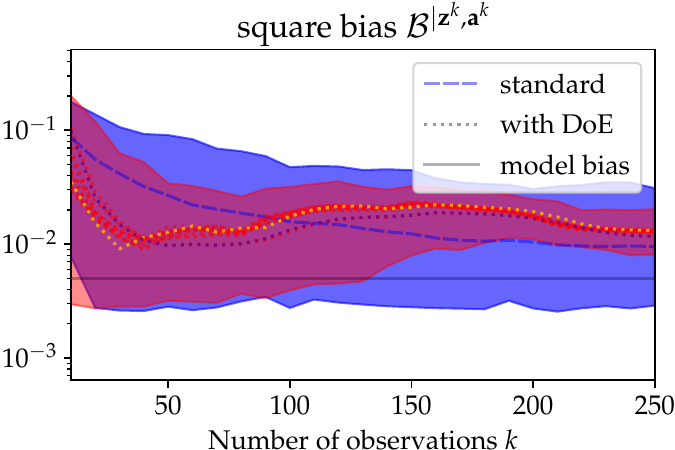}\includegraphics[width=5cm]{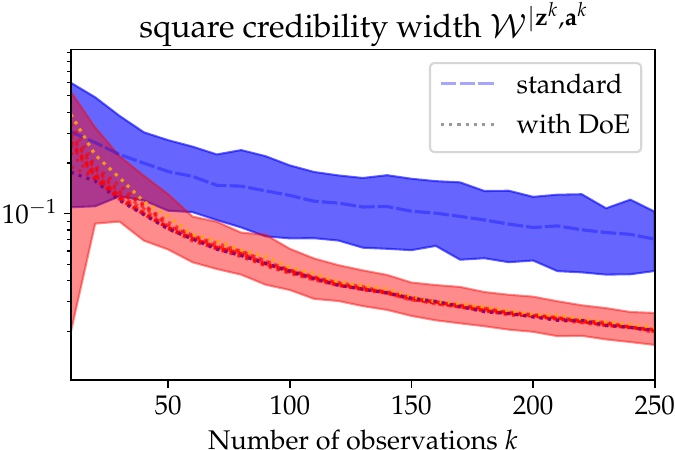}%
        \caption{As in Figure~\ref{fig:errors-psa} but here the IM  is the PGA.}
        \label{fig:errors-pga}
    \end{figure*}

\paragraph{{Stopping criterion}} {To the extent that an irreducible estimation bias can be expected in practice, an indication that such a bias has been reached would constitute a suitable stopping criterion for the DoE method. Since such a criterion would imply knowing, in some way, the bias itself and this is not possible with small data sizes, we suggest to study (i) the index $\cV\cI_k$ that measures the variation of the  quantity ~(\ref{eq:index}) that is used to select the seismic signals in the DoE method and (ii) the index $\cV\cP_k$ that measures the average evolution of the median of the fragility curve. These two quantities are defined in Section~\ref{sec:stopping_crit}. They reflect the information provided by adding a $k$-th data point. 

Fig.~\ref{fig:variaI} shows the evolutions of the average values of $\cV\cI_k$ for respectively the PSA and the PGA as IM, whereas Fig.~\ref{fig:variaP} is devoted to the evolutions of $\cV\cP_k$. These two figures, once again, show that learning with the DoE method is more effective than with the standard method. $\cV\cI_k$ and $\cV\cP_k$ are slightly influenced by the change in IM. Up to values of $k$ close to $k=50$, they even indicate that with PGA, the DoE method reaches more quickly a state in which the information brought by a new data point has less impact. This result is consistent with the one mentioned in the previous paragraph, namely that with the first $40$ data points, the bias value obtained with the PGA is smaller than with the PSA, with the DoE method.

The practitioner can therefore use these indicators to define a stopping criterion depending on a threshold value. For instance, in our case, a threshold value set at $\cV\cI_k = 10^{-3}$ gives approximately $k_{\max} = 60$ for the PSA and $k_{\max} = 40$ for the PGA. Concerning $\cV\cP_k$, the threshold value can be set, {for instance}, at $5 \% $: beyond this point, the index indicates that the estimated fragility curve exhibits minimal evolution when additional data are generated}. For both the PSA and the PGA, this also guarentees that the likelihood degeneracy probability is zero (see Fig.~\ref{fig:errors-degen}). {
These thresholds reflect our interpretation of the present case study and are not supported by existing guidance in the literature; practitioners should therefore select appropriate thresholds based on their objectives and the specific characteristics of the system they study}.
Remark that, even though the number of data required to reach a given stopping criterion is smaller with the PGA than with the PSA, this does not imply that the estimate is less accurate. Indeed, for small data sizes, the degeneracy phenomenon is more likely with the PSA than with the PGA (see Fig.~\ref{fig:errors-degen}). Consequently, the widths of the credibility intervals remain large for large $k$ with the PSA (see Figures~\ref{fig:errors-psa} and \ref{fig:errors-pga}). Our approach addresses this phenomenon by reducing the degeneracy even for a small number of observations. In practice, however, this number remains dependent on the IM of interest (see the paragraph ``Degeneracy disappearance'').

}

\begin{figure*}
    \centering%
    \includegraphics[width=5cm]{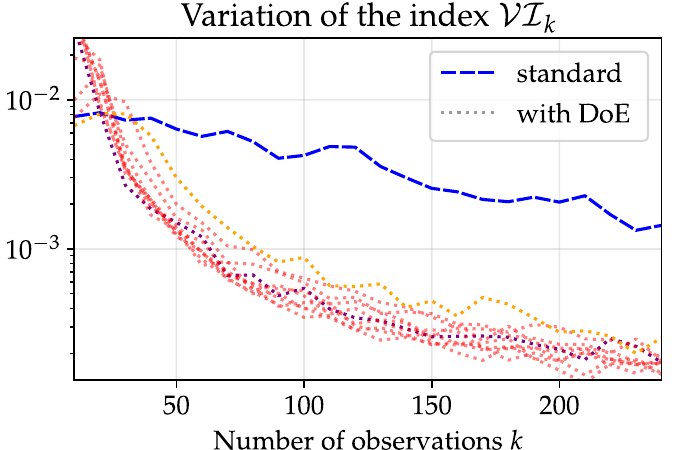}\ 
    \includegraphics[width=5cm]{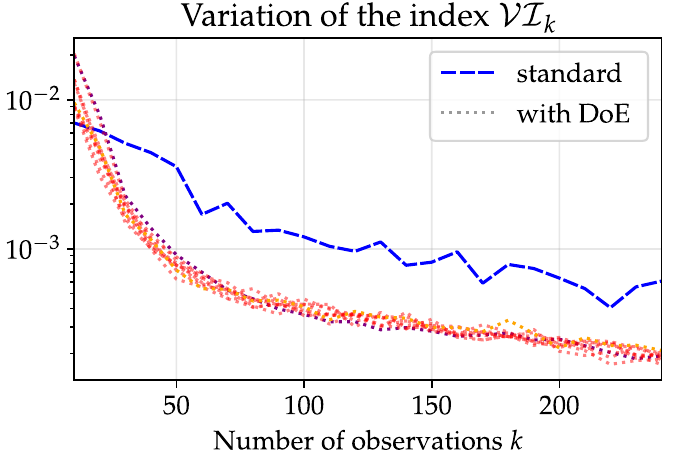}\\[0.7em]
    {\itshape\footnotesize Left: the considered IM is the PSA. Right: the considered IM is the PGA.}
    \caption{Average variations of the DoE index $\cV\cI_k$ as a function of the observed sample size $k$} %
    \label{fig:variaI}
\end{figure*}

\begin{figure*}
    \centering%
    \includegraphics[width=5cm]{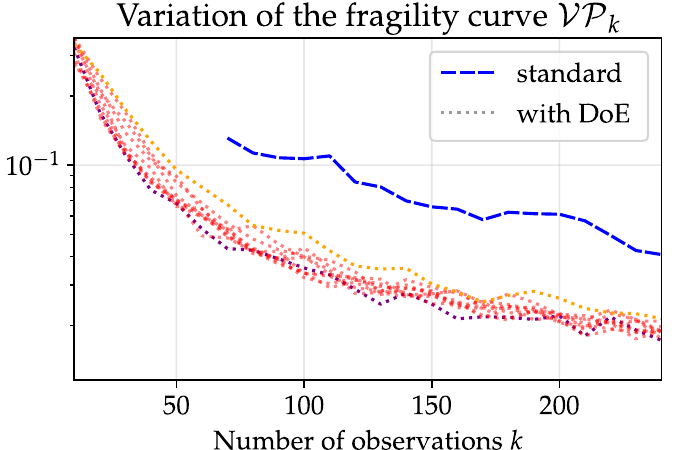}\ 
    \includegraphics[width=5cm]{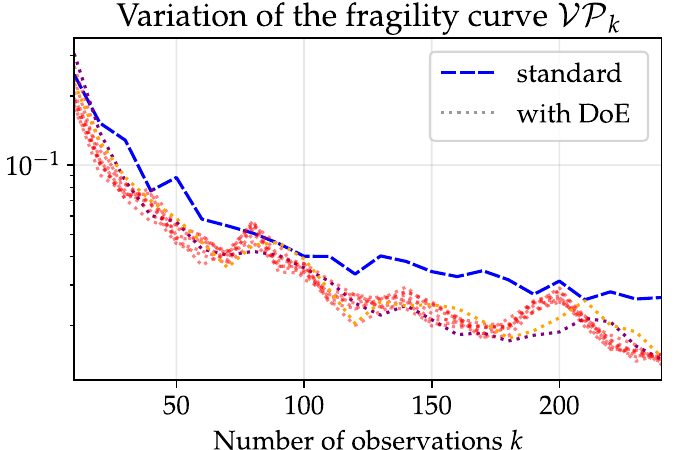}\\[0.7em]
    {\itshape\footnotesize Left: the considered IM is the PSA. Right: the considered IM is the PGA.}
    \caption{Average variations of the median \emph{a posteriori} fragility curve $\cV\cP_k$ as a function of the observed sample size $k$.} %
    \label{fig:variaP}
\end{figure*}

{\subsection{Synthesis} 

{Judging by its abundant use in the literature, the probit-lognormal model is nowadays considered to the practitioner as a model that is both pragmatic and relevant for estimating fragility curves. Its use with a limited amount of data is nevertheless a crucial point. Besides the facts that data is generally expensive ---especially for experimental tests--- and that the probit-lognormal model is likely to be biased in practice, there is no point in feeding it with a large amount of data. The objective of the methodology proposed in this work is then to make the most of the probit-lognormal model by obtaining the most accurate estimate possible with the minimum amount of data. To this end, we proposed a DoE methodology in a Bayesian framework based on the reference prior theory.

In practice, the choice of the best IM to consider for expressing fragility curves is an open subject. Although some criteria exist in the literature \cite{Cornell2004,Luco2007,Padgett2008}, there is no unique solution, as the choice ideally depends on the structure being studied. In practice, this choice may therefore be dictated by practical considerations. For example, at the scale of a large industrial facility, such as a nuclear power plant, the PGA may be chosen as IM for all relevant equipment to facilitate risk assessment studies, even if it is not optimal.

Although in \ref{app:toycases} it is demonstrated, through a toy case study, that in theory the performance of the DoE method is not affected by the choice of the IM, this is not the case in practice. What primarily affects its performance is the bias of the model, which can differ from one IM to another. So, in this work, two current IMs ---the PSA and the PGA--- were considered to assess the performance of the method which was applied to an equipment from the nuclear industry. Thus, it is shown that overall performance is best if the method is used with an IM that has a good level of correlation with the structural response of interest, such as the PSA in our setting. Nevertheless, in both cases, with a small number of data ---of the order of 80---, the performance of the proposed method is much better than that of the standard one. In our study, when the PGA is chosen as the IM, the method's performance degrades and is worse than with the standard method beyond 100 data points. This performance degradation is explained by the fact that the bias is more pronounced with the PGA in one of the two main domains where the DoE method selects data. 

These results reinforce the idea of using the probit-lognormal model only with small datasets, due to the inherent model bias that can be expected in practice. In order to guide the user, we also proposed two indicators that allow learning to be stopped based on quantitative information. Both criteria appeared sensitive to the change of IM and therefore to the associated potential biases. Let us add, however, that since the degeneracy phenomenon does not make the estimation optimal from the point of view of the size of the credibility interval, it is first of all appropriate, in practice, to continue learning until a non-degenerate sample is obtained (see Definition~\ref{def:degeneracy}).
}
}

\section{Conclusion \label{sec:conclusion}}

{Assessing the seismic fragility of structures and components when few binary data are available ---i.e., less than 100--- is a challenging task. To do this, it is often necessary to introduce some simplifying assumptions. A standard solution is to use a parametric model of the fragility curves such as the probit-lognormal model. This is the choice that was made in this work.

Several methods can be implemented to estimate the model parameters. The most recommended is the MLE-based method, coupled with a bootstrap approach to obtain confidence intervals. With very small numbers of data, the authors have however shown that the estimation could be affected by likelihood degeneracy problems which lead to inappropriate estimates of the fragility curves. To get around this problem, it is possible to resort to the use of Bayesian methods which are known to be regularizing. The question of the choice of the prior is then significant because it inevitably impacts all the resulting estimates, especially with small data sizes.

In this work, we have favored the Bayesian framework by relying on the reference prior theory in order to define the prior. This framework makes it possible to define a so-called objective prior in the sense that it only depends on the statistical model, namely the probit-lognormal model fed by binary data indicating the state of the structure of interest ---i.e., failure or non-failure--- for given IM values. The authors showed in a previous work that the solution to this problem is the well known Jeffreys prior. Since it is improper and does not allow to regularize the problem of degeneracy of the likelihood, a proper approximation of it has been proposed in this work. This results in a slightly informed prior that also has the advantage of being completely analytical, which facilitates its numerical implementation. {The proposed analytical expression is obtained by assuming a lognormal distribution for the IM. This assumption is consistent with the real seismic signals considered in this study, which correspond to a specific seismic scenario, but the literature shows that it is suitable for more general applications (near-source ground motions, artificial seismic signals generated from GMPE, etc.). Thus, thanks to its simple analytical formulation and its wide field of application, this prior can prove very useful in engineering practice using the Bayesian framework.} From this prior, authors have proposed a strategy of design of experiments inherited from the reference prior theory, i.e., based on the information theory. Given a large database of synthetic seismic signals, the proposed strategy intends to sequentially select the synthetic signals with which to perform the calculations or the tests, in order to optimally estimate ---by minimizing their number--- the probit-lognormal estimations of fragility curves.

Compared to a standard approach that aims to select seismic signals in their initial distribution, considering the Jeffreys prior, we have shown the superiority of our method. It allows to reach more quickly ---i.e., with a limited number of data--- a small estimation bias with a small credibility interval, while reducing the phenomenon of degeneracy. The proposed methodology therefore makes the most of the log-normal model. Since estimation biases can be expected in practice, it is recommended to use the method with few data (i.e., less than 100). To this end, we propose two stopping criteria that reflect the information provided by any additional data.

Finally, let us note that the DoE methodology was applied here within the framework of the reference prior theory because the authors believe that the objectivity of the prior is essential for fragility analyses. However, the DoE methodology can be applied with any prior that is proper.

}

\section*{Acknowledgement}

This research was supported by the CEA (French Alternative Energies and Atomic Energy Commission) and the SEISM Institute (\url{www.institut-seism.fr/en/})

\appendix

\section{Behavior of the DoE methodology on toy case studies}\label{app:toycases}

\subsection{Description of the case studies}

In order to gain more insight into the DoE methodology, we suggest in this appendix a brief analysis of its behavior on case studies that are free of any bias. 
Such case studies are purely theoretical, so that we qualify them as ``toy case studies''. Here, the observed data $(\mbf z^k,\mbf a^k)$ are perfectly generated according to the probit-lognormal statistical model presented in Section \ref{sec:model}:
a reference parameter $\theta^\ast=(\alpha^\ast,\beta^\ast)$ is chosen; then, from any input $a$ (that simulates a theoretical IM), an output $z\in\{0,1\}$ is generated that is equal to $1$ with probability $\Phi\left(\beta^{\ast -1}\log\frac{a}{\alpha^{\ast}}\right)$ and $0$ otherwise.

The selected inputs belong to an interval $\cA=(0,a_{\text{max}}]$.
They are picked by the design of experiments methodology that we present in this work with $\gamma=0.5$. Note that when implementing Algorithm \ref{alg:PE} in this section, there is no need to generate seismic signals from a particular value of $a$ here. Indeed the ``experiment'' (understand here: the generation of $z$) directly results from $a$.
Regarding the $k_0=2$ required initial values of $a$, we draw them uniformly in $\cA$.

In those toy case studies, the value of $\theta^\ast$ does not matter, in the sense that two such case studies with two different value of $\theta^\ast$ would be equivalent up to a translation and a dilatation w.r.t. $\log a$ (see \ref{app:equiv-toy} for a proof of this statement).

In this section, we present three different toy case studies. They are all implemented using the same value of $\theta^\ast$. 
They differ from the different domains $\cA$ that are selected for them.
The limits $a_{\text{max}}$ of these domains are chosen to simulate an upper bound 
on the possible IM that can result from a generated seismic signal.
They are selected to match a quantile of the reference fragility curve: if $q\in[0,1]$ one can derive $a_q$ such that $P^{\text{ref}}_f(a_q):=\Phi\left(\beta^{\ast -1}\log\frac{a_q}{\alpha^{\ast}}\right)=q$, i.e.
    \begin{equation*}
        a_q = \exp\left( \beta^\ast t_q+\log\alpha^\ast \right),
    \end{equation*}
where $t_q$ represents the $q$-quantile of a standard Gaussian distribution.

The three toy case studies implemented in this work result from defining $a_{\text{max}}=a_q$ with:
\begin{itemize}
    \item $q\simeq1$ (actually $q=1-10^{-3}$), to represent a case where no upper bound (or a very high one regarding the fragility curve) limits the generation of IMs. %
    \item $q=0.9$.
    \item $q=0.8$.
\end{itemize}
The reference fragility curve of each of these toy case studies are plotted on Fig.~\ref{fig:histograms_toy} (green solid lines) for each of the values of $q$ aforementioned, and for $\theta^\ast=(3,0.3)$. 

\subsection{Results}

The DoE strategy has been replicated a hundred times for each case study. Each time, we have stopped the algorithm after it reached a number of $250$ observed samples.
The last $100$ of these generated IMs have been kept for each replication. As we have carried out $100$ replications, they constitute a sample of $10^4$ values of IMs selected by the DoE algorithm.
The three empirical distributions of these samples resulting from the three values of $q$ are compared in Fig.~\ref{fig:histograms_toy}.

When the method is not limited by any upper bound for the choice of IMs (i.e. case $q=1$), 
the logarithms of the selected ones are symmetrically distributed around $\log\alpha^\ast$. This comment was expected as in the statistical modeling, $\log a$ is mathematically symmetric around $\log\alpha^\ast$ as well.

The bimodal appearance of the distribution supports the ``well-posedness'' of the method. Indeed, in a case where the result is deterministic, the knowledge of the evaluation of the curve in two points is enough to recover its parameter $\theta^\ast$. Thus, it makes sense that the optimal distribution of the IM for estimating the whole fragility curve is to retrieve a proper estimation of its evaluation in two zones of the domain.
A simple calculation is done in~\ref{sec:au-maths-calc} to suggest possible appropriate centers of the zones. 
{These are called $a^u_1$ and $a^u_2$ and are appropriate in the sense that they are the points that most help distinguish a stochastic fragility curve from a deterministic one.}
In Fig.~\ref{fig:histograms_toy}, their values are emphasized (dashed pink line). They are clearly close from the actual modes of the empirical distribution of the IM in the case $q=1$.

{In cases where the selection of values of $a$ is limited (i.e., when $q<1$), the method has been adapted to select $a = a_{\text{max}}$ when the algorithm suggests choosing a value larger than the maximum allowed value. We notice that the method seems to select the same ideal domain, in the broad sense, as when $q=1$ :  the data distribution maintains its bimodal character, the two modes being located in the vicinity of the two optimal domains. The method is robust: in both cases, $q_1 = 0.9$ and $q_2 = 0.8$, we observe similar proportions of $a$ values selected by the DoE algorithm, below the maximum IM values associated with each quantile, namely $a_{\text{max}}^1$ and $a_{\text{max}}^2$. Therefore, many values of $a$ are equal to $a_{\text{max}}^1$ or $a_{\text{max}}^2$, as appropriate.}

In Fig.~\ref{fig:metrics-toy}-left, we compare the average square bias
resulting from the 100 replications of the three case studies.
Their difference is difficult to distinguish, so the robustness of the method is confirmed:
even when the limit among the available IMs sharply intersects  the reference fragility curve, the distribution of the selected ones still provides an accurate estimation.
However, a close look allows to notice the expected loss of performance induced by such limit: the smaller is $q$, the larger is the bias.

Regarding the other metrics exposed in Fig.~\ref{fig:metrics-toy}, namely the indices $\cV\cI_k$ and $\cV\cP_k$, it is interesting to notice that 
their behaviors are comparable to that of the real case study.
In particular, $\cV\cI_k$ reaches the threshold $10^{-3}$ for a similar number of observations (around $50$) in any of the case studies implemented in this work (real ones as well as toy ones).
As a matter of fact, the index sequentially measures the quantity of information brought by the next seismic signal selected by the DoE. Thus, the stability of that quantity between case studies supports the robustness of our approach in selecting IMs to inform the estimate and enhance the learning.
However, the indices $\cV\cP_k$ are smaller for toy case studies. Without any model bias, the estimates reach more quickly the reference (see the average bias on Fig.~\ref{fig:metrics-toy}-left), so that they evolve less.

{Finally, this section helps to clarify the results obtained with the two IMs that are the PSA and the PGA in section \ref{sec:quantitative-study}. The main difference between them is that with the PGA it is not possible to completely describe the fragility curve. With the PGA, the maximum value of its probit-lognormal estimate is approximately $0.8$ but does not reach $1$ as with the PSA. If we ignore the bias, this means that the case $q=1$ is similar to the situation where the PSA is used as IM, while the cases  $q=0.8$ and $q=0.9$ are similar to the situation where the PGA is used. All things being equal, this study shows that the performance of the DoE method is weakly affected by a restricted learning domain. It is then likely that the presence of a significant bias in one of the two main domains in which the DoE method selects data is the cause of its performance degradation.}

\begin{figure*}
        \centering%
        \makebox[0pt][c]{%
        \includegraphics[width=5cm]{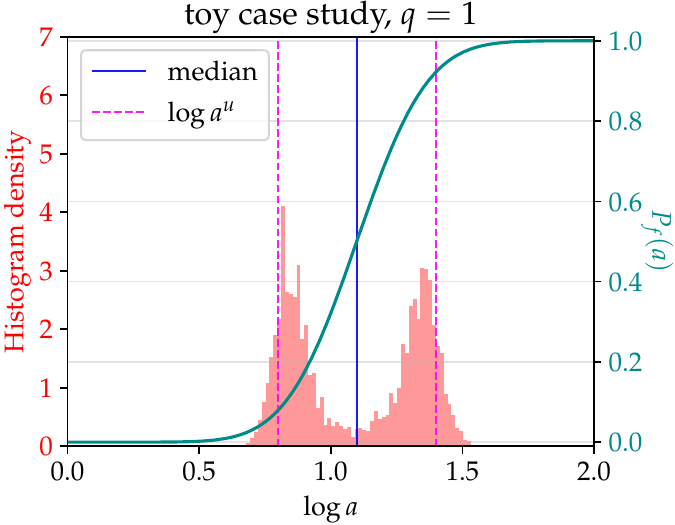}\ 
        \includegraphics[width=5cm]{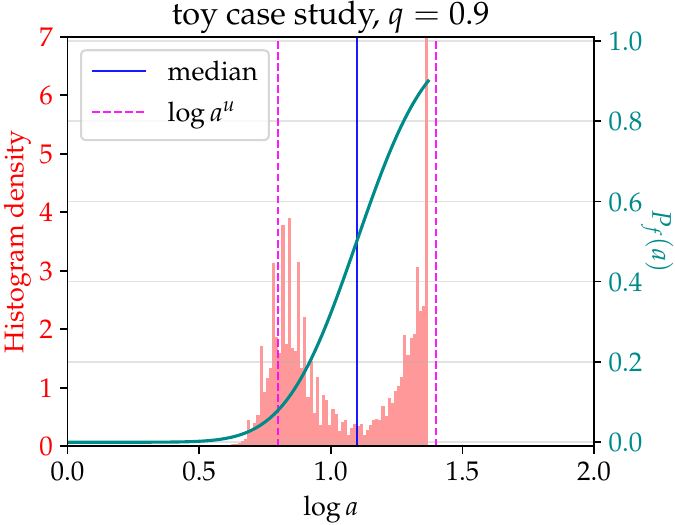}\ 
        \includegraphics[width=5cm]{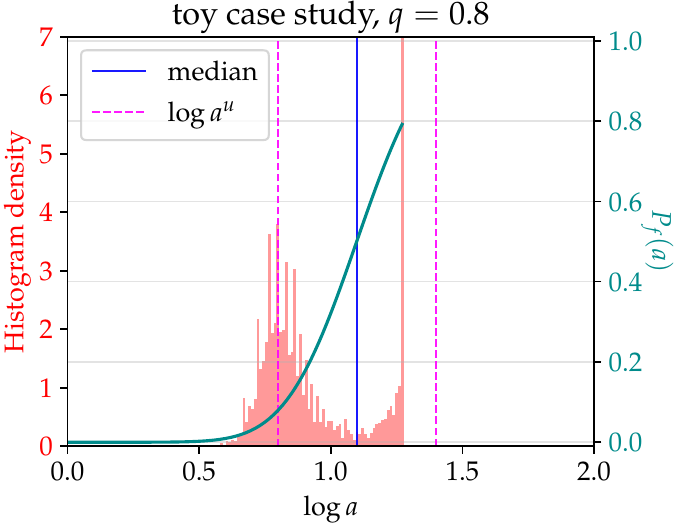}}\\[0.7em]%
            \begin{minipage}{0.94\textwidth}\footnotesize\itshape 
                For each of the three toy case studies: the distribution of the 100 last selected IM by the method (red histograms) along with the associated reference fragility curve (green) of the toy case study, which is cut at a certain quantile $q$ in each figure. The median of the reference (solid blue) and the values $a^u_{1}$, $a^u_{2}$ (dashed pink) are emphasized.
            \end{minipage}
        \caption{Stationary distributions of the selected IM by the method.}
        \label{fig:histograms_toy}
    \end{figure*}

\begin{figure*}
    \centering%
    \makebox[0pt][c]{%
    \includegraphics[width=5cm]{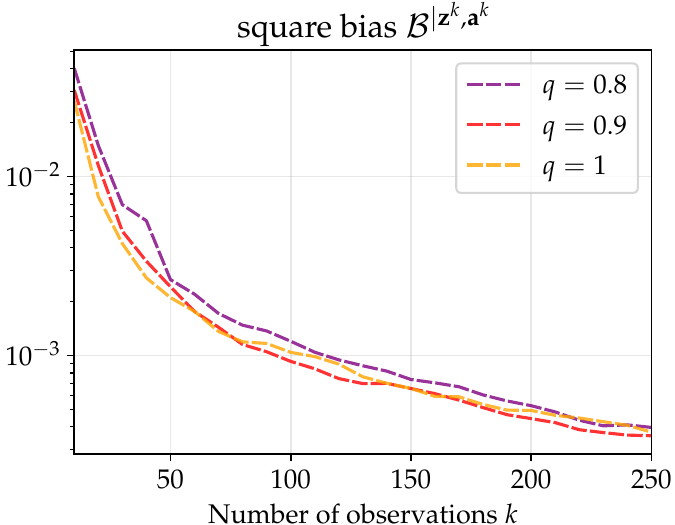}%
    \includegraphics[width=5cm]{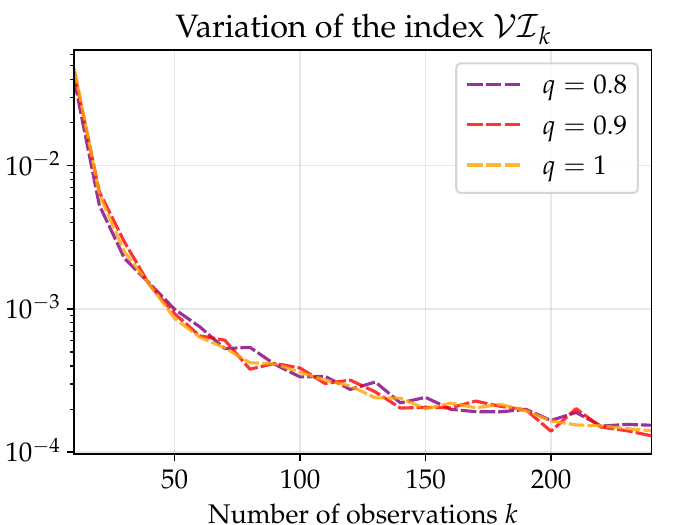}%
    \includegraphics[width=5cm]{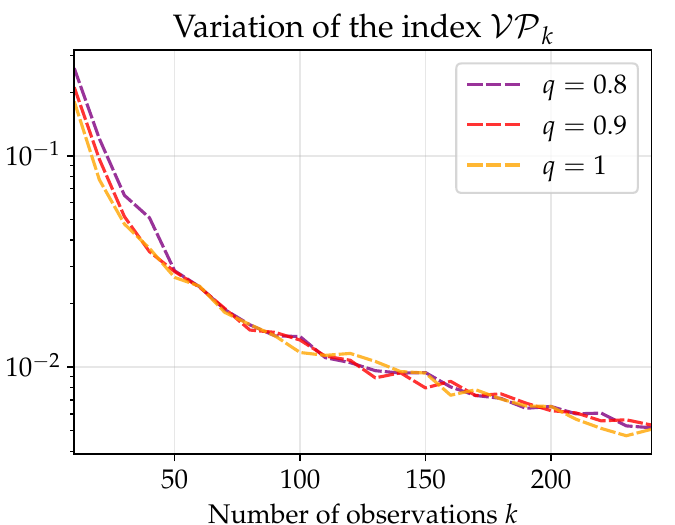}}%
    \caption{ Average values of $\cB^{|\mbf z^k,\mbf a^k}$, $\cV\cI_k$, $\cV\cP_k$ as functions of the number of observation $k$ for the three toy case studies.}
    \label{fig:metrics-toy}
\end{figure*}

\subsection{Simple suggestion of a domain where IMs should be theoretically selected}\label{sec:au-maths-calc}

In this section we suggest a criterion that gives insight on a domain where the
fragility curve should be evaluated in order to optimize the estimation.
{For this purpose, we study the quantity $\EE[ |P^{\text{ref}}_f(a)-P_f(a)| ]$ as a function of $a$, where $P^{\text{ref}}_f(a)$ represents a deterministic fragility curve and $P_f(a)$ denotes a stochastic estimate.
If $a^\ast$ is a maximal argument of this quantity, it would 
be the one that helps the most distinguishing the two curves.}

For this theoretical study, we assume to be in the settings of the toy case studies defined in this appendix, i.e. $P^{\text{ref}}_f(a)=\Phi\left(\beta^{\ast -1}\log\frac{a}{\alpha^{\ast}}\right)$ for a couple $(\alpha^\ast,\beta^\ast)$.
We derive
    \begin{multline}
        \hspace*{-0.5em}\frac{d}{da}\EE|P^{\text{ref}}_f(a)-P_f(a)|= \frac{1}{a\sqrt{2\pi}}\int_\Theta \left[\frac{1}{\beta^\ast}\exp\left(-\frac{(\log a-\log\alpha^\ast)^2}{2\beta^{\ast 2}}\right)   \right.\\
  \left.      -\frac{1}{\beta} \exp\left(-\frac{(\log a-\log\alpha)^2}{2\beta^2}\right)\right] p_{\alpha,\beta}(\alpha,\beta)d\alpha d\beta. \label{eq:derive-P-EP}   
    \end{multline}
To pursue, we can assume that $(\alpha,\beta)$ follows an inverse-gamma-normal distribution. This class covers a wide range of different distributions and is conjugate with a normal distribution, so that the following computation is tractable. Additionally, the probit-lognormal modeling of the fragility curve is equivalent to assume that there exists a latent variable $Y$ that is linearly correlated with the input: $Y=\log A +\cN(-\log\alpha^\ast,\beta^\ast)$. Considering this latent modeling, we can prove that both the reference prior and the posterior distribution of $\alpha,\beta$ belong to the class of inverse-gamma-normal ones \cite{VanBiesbroeckESAIMProcS}.
The inverse-gamma-lognormal distribution is defined by its density:
\begin{align*}
&    p_{\alpha,\beta}(\alpha,\beta) = K(c,d,\tau,\zeta)\left(\frac{1}{\beta^2}\right)^{c+1/2}\hspace*{-0.9em}\exp\left(-\frac{d}{\beta^2}\right)\exp\left(-\tau\frac{(\log\alpha-\zeta)^2}{2\beta^2}\right),\\
&    \text{with}\quad K(c,d,\tau,\zeta) =\frac{d^c\sqrt{\tau}}{\Gamma(c)\sqrt{2\pi}},
\end{align*}
where $c>0$, $d>0$, $\tau>0$ and $\zeta\in\RR$ are the parameters of the distribution. %

Thus, the right-hand term in  Eq.~(\ref{eq:derive-P-EP}) becomes:
\begin{align*}
    &\frac{1}{\sqrt{2\pi}}\frac{1}{a\beta^\ast}\exp\left(-\frac{(\log a-\log\alpha^\ast)^2}{2\beta^{\ast2}}\right)\\
    & - \frac{d^c}{(d+\frac{\tau}{2\tau+2}(\zeta-\log a)^2)^{c+1/2}}\frac{\sqrt{\tau}}{\sqrt{\tau+1}}\frac{\Gamma(c+\frac{1}{2})}{\Gamma(c)}\frac{1}{a\sqrt{2\pi}},
\end{align*}
so that it equals $0$ if and only if $a$ verifies:
\begin{align}\label{eq:equality-au}
   & |\zeta-\log a|\frac{e^{1/2}}{\beta^\ast}\exp\left(-\frac{(\log a-\log\alpha^\ast)^2}{2\beta^{\ast2}}\right)\\ &= \frac{d^c|\zeta-\log a|e^{1/2} }{(d+\frac{\tau}{2\tau+2}(\zeta-\log a)^2)^{c+1/2}}\frac{\sqrt{\tau}}{\sqrt{\tau+1}}\frac{\Gamma(c+\frac{1}{2})}{\Gamma(c)}.\nonumber
\end{align}
Given the derivations conducted in \ref{app:additionalmaths}, the right-hand term of the above equation tends to 1 when information is provided by observed data to the posterior. In this case and if $\zeta=\log\alpha^\ast$  the equation leads to $a=a^u_1$ or $a=a^u_2$ where $a^u_1$, $a^u_2$ are defined by
    \begin{equation*}
        a^u_{1,2} = \alpha^\ast\exp\left(\pm\beta^\ast\right).
    \end{equation*}

These values are built on a criterion that has a concrete and practical sense. However they remain purely theoretical as they depend on the exact value of $\theta^\ast$.

\subsection{Additional mathematical derivations}\label{app:additionalmaths}

    The goal of this section is to study the domain of the right-hand term of Eq. (\ref{eq:equality-au}), especially in the ``worst'' cases (the distribution is very-informative or non-informative).
    We rely on the following lemmas:
    \begin{lem}\label{lem:app-ineqG}
        For any $c>0$, $\displaystyle{\frac{\Gamma\left(c+\frac{1}{2}\right)}{\Gamma(c)}<\sqrt{c}}$. Also, $\displaystyle{
            \lim_{c\rightarrow\infty}\frac{\Gamma\left(c+\frac{1}{2}\right)}{\Gamma(c)\sqrt{c}}=1.
            }$
    \end{lem}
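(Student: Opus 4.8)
The plan is to prove the two assertions separately, since the first is an exact inequality valid for every $c>0$ while the second is an asymptotic statement. Both reduce to classical facts about the Gamma function: I would use the strict log-convexity of $\Gamma$ for the inequality, and Stirling's formula for the limit.

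For the inequality $\frac{\Gamma(c+1/2)}{\Gamma(c)}<\sqrt{c}$, the key observation is that $\log\Gamma$ is \emph{strictly} convex on $(0,\infty)$, since $(\log\Gamma)''(x)=\sum_{n\geq0}(x+n)^{-2}>0$. I would then apply strict midpoint convexity to the two distinct points $c$ and $c+1$, whose midpoint is $c+\tfrac12$, to obtain
\begin{equation*}
\log\Gamma\!\left(c+\tfrac12\right)<\tfrac12\log\Gamma(c)+\tfrac12\log\Gamma(c+1).
\end{equation*}
Exponentiating and inserting the functional equation $\Gamma(c+1)=c\,\Gamma(c)$ gives $\Gamma(c+\tfrac12)^2<c\,\Gamma(c)^2$, which is precisely the claim after taking square roots and dividing by $\Gamma(c)>0$. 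The strictness is automatic here because $c\neq c+1$ and the second derivative is strictly positive.

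For the limit I would invoke Stirling's expansion $\Gamma(x)=\sqrt{2\pi}\,x^{x-1/2}e^{-x}(1+O(1/x))$ applied to both $\Gamma(c+1/2)$ and $\Gamma(c)$, and form the quotient
\begin{equation*}
\frac{\Gamma(c+1/2)}{\Gamma(c)\sqrt{c}}=\frac{(c+1/2)^{c}e^{-1/2}}{c^{c}}\bigl(1+O(1/c)\bigr)=\left(1+\frac{1}{2c}\right)^{c}e^{-1/2}\bigl(1+O(1/c)\bigr).
\end{equation*}
Since $\left(1+\tfrac{1}{2c}\right)^{c}\to e^{1/2}$ as $c\to\infty$, the leading factor converges to $e^{1/2}e^{-1/2}=1$ and the remainder vanishes, yielding the stated limit. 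The only real obstacle is bookkeeping: one must track the $x^{x-1/2}$ and $e^{-x}$ contributions carefully so that the $e^{-c}$ and $c^{c-1/2}$ factors cancel against $\sqrt{c}$ as written, and verify that the two $O(1/c)$ remainders from the separate Stirling expansions combine into a single vanishing error. Everything else is a routine verification.
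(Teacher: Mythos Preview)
Your proof is correct. For the limit, both you and the paper invoke Stirling's formula; the paper cites the consequence $\Gamma(c+t)\sim\Gamma(c)c^t$ directly while you unpack the full asymptotic expansion, but this is the same argument at a different level of detail.

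For the inequality, however, you take a genuinely different route. The paper appeals to Gautschi's inequality $c^{1-s}<\Gamma(c+1)/\Gamma(c+s)<(c+1)^{1-s}$ with $s=1/2$ together with $\Gamma(c+1)=c\,\Gamma(c)$, which immediately yields $\Gamma(c+1/2)<\sqrt{c}\,\Gamma(c)$. Your argument via strict midpoint log-convexity of $\Gamma$ is more self-contained: it needs only $(\log\Gamma)''>0$, which follows from the series for the trigamma function, and avoids quoting a named inequality. Gautschi's inequality, on the other hand, gives a two-sided bound and so carries a bit more information, but that extra strength is not used here. Either approach is perfectly adequate for the lemma.
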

    \begin{proof}
        The first statement is a direct consequence of Gautschi's inequality \cite{Gautschi1959}: for any $c>0$, $s\in(0,1)$,
            \begin{equation*}
                c^{1-s}<\frac{\Gamma(c+1)}{\Gamma(c+s)}<(c+1)^{1-s}.
            \end{equation*}
        Fixing $s=1/2$ and using the identity $\Gamma(c+1)=c\Gamma(c)$ leads to the result.

        For the second statement, we can rely on Stirling's formula \cite{DavisGamma1959} that yields:
            \begin{equation*}
                \Gamma(c+t) \equi{c\rightarrow\infty} \Gamma(c)c^t
            \end{equation*}
        for any $t\in\CC$.
    \end{proof}
    
    \begin{lem}\label{lem:app-ineqdcf}
        Let $f>0$, for any $d,c>0$ we define:
        \begin{equation*}
            g(c,d) = \frac{\sqrt{c}}{(d+f)^{1/2}}\left(\frac{d}{d+f}\right)^c, \quad h(d) = \frac{1}{2}\log\left(1+\frac{f}{d}\right)^{-1}.
        \end{equation*}
        Thus, for any $c,d>0$, $g(c,d)<(2ef)^{-1/2}$, and $\displaystyle{\lim_{d\rightarrow\infty}g(h(d),d)=(2ef)^{-1/2}}.$
    \end{lem}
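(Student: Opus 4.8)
The plan is to recognize that $h(d)$ is precisely the value of $c$ that maximizes $g(\cdot,d)$ for each fixed $d$, an observation which simultaneously yields both the uniform bound and the limit. First I would fix $d>0$ and study $c\mapsto g(c,d)$ through its logarithm,
\begin{equation*}
\log g(c,d) = \tfrac12\log c - \tfrac12\log(d+f) + c\log\tfrac{d}{d+f}.
\end{equation*}
Differentiating in $c$ gives $\partial_c\log g = \tfrac{1}{2c} - \log\!\left(1+\tfrac{f}{d}\right)$, which is strictly decreasing in $c$ and vanishes exactly at $c=\tfrac{1}{2\log(1+f/d)}=h(d)$. Hence $c=h(d)$ is the unique global maximizer of $g(\cdot,d)$, and it suffices to analyse the single quantity $g(h(d),d)$.

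Next I would evaluate $g$ at this optimum. By the definition of $h(d)$ one has $h(d)\log\tfrac{d}{d+f}=-\tfrac12$, so the power term collapses, $\left(\tfrac{d}{d+f}\right)^{h(d)}=e^{-1/2}$, and using $\sqrt{h(d)}=\bigl(2\log(1+f/d)\bigr)^{-1/2}$ I obtain the closed form
\begin{equation*}
g(h(d),d) = \frac{e^{-1/2}}{\sqrt{2(d+f)\log(1+f/d)}}.
\end{equation*}
Both remaining claims then reduce to statements about this expression.

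For the uniform bound, since $g(c,d)\le g(h(d),d)$ for every $c>0$, the inequality $g(c,d)<(2ef)^{-1/2}=e^{-1/2}/\sqrt{2f}$ is equivalent to $(d+f)\log(1+f/d)>f$. Writing $x=f/d>0$, this becomes the elementary estimate $(1+x)\log(1+x)>x$, which I would establish by noting that $\phi(x)=(1+x)\log(1+x)-x$ satisfies $\phi(0)=0$ and $\phi'(x)=\log(1+x)>0$ for $x>0$, so $\phi$ is strictly increasing and positive on $(0,\infty)$. For the limit, the same substitution gives $(d+f)\log(1+f/d)=f\,(1+x)\tfrac{\log(1+x)}{x}\to f$ as $x\to0^+$ (i.e. $d\to\infty$), whence $g(h(d),d)\to e^{-1/2}/\sqrt{2f}=(2ef)^{-1/2}$.

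The crux of the argument is the first step: spotting that $h(d)$ is exactly the maximizer of $g(\cdot,d)$, since this collapses two a priori separate assertions into one short one-dimensional study and makes the subsequent computations essentially routine. The only other point demanding care is the strict inequality $(1+x)\log(1+x)>x$, which is what guarantees the bound is strict rather than merely non-strict, with equality approached only in the limit $d\to\infty$.
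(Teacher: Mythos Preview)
Your proof is correct and follows essentially the same route as the paper: identify $h(d)$ as the maximizer of $c\mapsto g(c,d)$ by differentiation, evaluate $g(h(d),d)$ in closed form, and reduce both claims to the elementary inequality $(1+x)\log(1+x)>x$ together with its limit as $x\to0^+$. The only cosmetic differences are that the paper differentiates $g$ rather than $\log g$ and packages the final inequality via $v(t)=(1+t^{-1})\log(1+t)>1$ instead of your $\phi(x)=(1+x)\log(1+x)-x>0$, which are equivalent.
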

    \begin{proof}
        Let us differentiate $g$ w.r.t. $c$:
        \begin{equation*}
            \frac{\partial}{\partial c}g(c,d) = (d+f)^{-1/2}\left(\frac{d}{d+f}\right)^c\left[\sqrt{c}\log\left(\frac{d}{d+f}  \right)+\frac{1}{2\sqrt{c}}\right].
        \end{equation*}
        The above quantity is decreasing w.r.t. $c$ and equals $0$ when $c=h(d)$.
        We deduce that for any $d>0$, 
        \begin{equation}\label{eq:lem:ghdd}
            g(c,d)<g(h(d),d) = \frac{(2e)^{-1/2}}{(d+f)^{1/2}}\log\left(1+\frac{f}{d}\right)^{-1/2}.
        \end{equation}
        
        Now, let us briefly study the function $v(t)=(1+t^{-1})\log(1+t)$ for $t>0$.
        Firstly, $\log(1+t)\equi{t\rightarrow0}t$ so that $v(t)\conv{t\rightarrow0}1$. Secondly, $v'(t) = \frac{1}{t}-\frac{1}{t^2}\log(1+t)$, which is positive for any $t>0$, and tends to $\frac{1}{2}$ when $t=0$. Therefore, $v(t)>v(0)=1$ for any $t>0$. 

        Going back to Eq.~(\ref{eq:lem:ghdd}), we can write $g(h(d),d)=(2ef)^{-1/2}v(f/d)^{-1/2}$ to obtain that
            \begin{equation*}
                g(c,d) <(2ef)^{-1/2}\quad\text{and} \quad \lim_{d\rightarrow\infty}g(h(d),d)=(2ef)^{-1/2}.
            \end{equation*}
    \end{proof}

The result of Lemma \ref{lem:app-ineqdcf} with $f=\frac{1}{2}\frac{\tau}{\tau+1}(\log a-\zeta)^2$ lets us write the following inequality:
    \begin{equation*}
        \frac{d^c\sqrt{c}}{(d+\frac{\tau}{2\tau+2}(\zeta-\log a)^2 )^{c+1/2}}\frac{\sqrt{\tau}}{\sqrt{\tau +1}} < \frac{e^{-1/2}}{|\zeta-\log a|},
    \end{equation*}
    this upper-bound being reached at the boundary of the domain. Combining this statement with Lemma \ref{lem:app-ineqG}, we obtain: 
        \begin{equation*}
            \frac{d^c|\zeta-\log a|e^{1/2}}{(d+\frac{\tau}{2\tau+2}(\zeta-\log a)^2 )^{c+1/2}}\frac{\sqrt{\tau}}{\sqrt{\tau +1}}\frac{\Gamma(c+\frac{1}{2})}{\Gamma(c)} < 1,    
        \end{equation*}
    this upper-bound being reached at the boundary of the domain.

    Additionally, the left-hand term of the above equation is non-negative and tends to $0$ for some extreme values of the parameters.
    Therefore, there exist two extreme cases for the resolution of Eq.~(\ref{eq:equality-au}): when its right-hand term equals $0$ and when it equals $1$. The former corresponds to the limit case where the solution is $a=\pm\infty$ or $a=\zeta$, with $a=\pm\infty$ minimizing the quantity of interest. 
    The latter corresponds to the limit case where the solution verifies
        \begin{equation*}
            \frac{|\zeta-\log a|^2}{\beta^{\ast2}} = \exp\left( \frac{|\log\alpha^\ast-\log a|^2}{\beta^{\ast2}} -1\right).
        \end{equation*}
    The solutions of the above equation when $\zeta =\log\alpha^\ast$ are $a=\alpha^\ast\exp\left(\pm\beta^\ast\right).$
    They correspond to the limit case when the distribution $p_{\alpha,\beta}$ becomes informed and unbiased. 
    If the model is appropriately specified, that should be the fate of the posterior distribution: the posterior tends to match a normal distribution with mean $\theta^\ast$ and variance $\frac{1}{n}\cF^{-1}(\theta^\ast)$ \cite{VanDerVaart1992}.

\subsection{Equivalence between toy case studies}\label{app:equiv-toy}

    Let us consider two different toy case studies. They can differ by their reference parameters $\theta_1^\ast$, $\theta_2^\ast$ and their IMs $a_1$, $a_2$, which are defined in the domains $\cA_1$, $\cA_2$.
    We suppose that the bounds of $\cA_1$ and $\cA_2$ are defined to match the same quantiles of the reference fragility curves: $\cA_1=(c_1^1,c_1^2)$, $\cA_2=(c_2^1,c_2^1)$ with $P^{\mathrm{ref}}_1(c_1^1)=P^{\mathrm{ref}}_2(c_2^1)$ and $P^{\mathrm{ref}}_1(c_1^2)=P^{\mathrm{ref}}_2(c_2^2)$, where $P^{\mathrm{ref}}_1$ (reps. $P^{\mathrm{ref}}_2$) denotes the reference fragility curves given by $\theta_1^\ast$ (resp. $\theta^\ast_2$).

    Therefore, if we introduce $\tilde a$ that defines the value of a new IM on the second toy case study, which verifies
    \begin{equation*}
        \log\tilde a = \frac{\log c_1^2-\log c_1^1}{\log c_2^2-\log c_2^1}\log\frac{a_2}{c_2^1} + \log c_1^1,
    \end{equation*}
    we obtain that $\tilde a$ lives in the domain $\tilde\cA=\cA_1$. Given this new IM, the reference fragility curve of the second case study can be re-defined as a function of $\tilde a$:
        \begin{align*}
            &\tilde P^{\mathrm{ref}}_2(\tilde a)\\ &= \Phi\left(\beta^{\ast-1}_2\left[\frac{\log c_2^2-\log c_2^1}{\log c_1^2-\log c_1^1}\left(\log\tilde a-\log c_1^1\right)+\log c_1^1-\log\alpha^\ast_2 \right]\right) 
        \end{align*}
    for a certain $\tilde\theta^\ast_2 = (\tilde\alpha_2^\ast,\tilde\beta_2^\ast)$.
    We have $\tilde P^{\mathrm{ref}}_2(c_1^1)=P^{\mathrm{ref}}_2(c_2^1)=P^{\mathrm{ref}}_1(c_1^1)$ and $\tilde P^{\mathrm{ref}}_2(c_1^2)=P^{\mathrm{ref}}_2(c_2^2)=P^{\mathrm{ref}}_1(c_1^2)$, thus, as the parameter $\theta$ defines uniquely a probit-lognormal fragility curve, $\theta_1=\tilde\theta_2$.

    As a conclusion, given a rescaling of the IM, the two case studies are equivalent.

\end{document}